\documentclass[10pt,journal,compsoc]{IEEEtran}

\usepackage{graphicx}
\usepackage[caption=false,font=footnotesize]{subfig} 
\usepackage{amsmath,amssymb,amsthm}
\usepackage[boxruled,vlined,linesnumbered]{algorithm2e}
\usepackage{kbordermatrix}
\usepackage{booktabs}
\usepackage{threeparttable}
\usepackage{adjustbox}
\usepackage{xcolor} 

\newtheorem{lemma}{Lemma}
\newtheorem{theorem}{Theorem}

\newtheorem{example}{Example}

\renewcommand*{\algorithmcfname}{Procedure}
\begin{document}
\title{Computing All-Pairs Shortest Communication Time Paths for Periodically Varying Satellite Networks}

\author{Koushik Sinha \IEEEmembership{Senior Member,~IEEE}, Dibakar Saha, Sai Puppala and Bhabani P. Sinha \IEEEmembership{Fellow,~IEEE}
\IEEEcompsocitemizethanks{\IEEEcompsocthanksitem Koushik Sinha and Sai Puppala are with the School of Computing, Southern Illinois University, Carbondale, USA.
Dibakar Saha is with the Department of Computer Applications, National Institute of Technology Raipur, Chhattisgarh, India.
Bhabani P Sinha is with the Department of Computer Science and Engineering, Sister Nivedita University, Kolkata, India\protect\\
E-mail: {koushik.sinha@cs.siu.edu, dibakar.saha10@gmail.com, sai.puppala@siu.edu, sinha.bhabani@gmail.com}
}\thanks{}}

\IEEEcompsoctitleabstractindextext{
\begin{abstract}
We address the problem of designing an efficient algorithm for computing the all-pairs shortest-time paths in time-varying networks, specifically targeting the dynamic topological characteristics of low earth orbit (LEO) satellite constellations. We model such temporal graphs as $G = (V, E_t)$, where $V=\{1, 2,\ldots , n\}$ is the set of $n$ vertices in the graph, and $E_t$ is the set of $M$ time-varying edges considered over a total time window of size $T = M\tau$, $\tau$ being the duration of each interval in that time window. For such a network, our algorithm finds the all-pair shortest paths in $O(M^2n^3)$-time. A key advantage of our approach is its computational structure, which mirrors two matrix multiplication steps, first in the forward direction and then in the backward direction. This structure of the algorithm makes it exceptionally suitable for parallelization on SIMD architectures like GPUs, thereby offering a significant speedup over traditional shortest-path algorithms and enabling efficient, network-wide message routing in rapidly changing LEO satellite constellations.
\end{abstract}
\begin{IEEEkeywords}
Temporal graphs, All-pairs shortest paths, Shortest distance path, Shortest time path, Satellite networks, LEO satellite constellations.
\end{IEEEkeywords}
}
\maketitle

\IEEEdisplaynotcompsoctitleabstractindextext
\IEEEpeerreviewmaketitle

\section{Introduction}

 \IEEEPARstart{W}e are witnessing an era of unprecedented expansion in satellite launches, with ambitious plans for multi-orbit, mega-constellations of satellites in the coming decade, involving thousands of interconnected satellites spanning low earth orbit (LEO), medium earth orbit (MEO), and geostationary orbit (GEO). The future of connectivity is quite literally ascending into orbit, with LEO satellite mega-constellations projected to play a pivotal role in that transformation as inter-satellite links (ISLs) become increasingly central to constellation-wide routing and capacity management \cite{wang2025ntn6g,cen2024satflow}. With a significantly lower cost to build and launch compared to MEO and GEO satellites, LEO satellites are an extremely attractive proposition for future networks, especially in the context of 6G non-terrestrial networks (NTNs), as their shorter distances from the surface of the Earth translate to lower latency and higher link bandwidth \cite{wang2025ntn6g}. As such, the concept of mega-constellations of LEO satellites for networking has gained significant traction in recent times, with major operators and programs advancing large-scale deployments and regulatory authorizations \cite{fccStarlinkGen2DA24-1193,telesatLightspeed198,onewebEutelsatLeMonde2025,chinaGuowangSpaceCom2025}.

 The potential of such space-based networks extends far beyond just providing networking services --- the ability to host in-orbit computing, thereby transitioning satellites into an \textsl{integrated communication and computational fabric} for offloading computation from terrestrial data centers \cite{yin2025orbitalEdgeComputingSurvey}. In May 2025, China reported the launch of an initial set of satellites toward a space computing constellation for collaborative AI-oriented workloads \cite{chinaThreeBodyComputingConstellation2025}. Related plans have also been discussed publicly for India and Europe in the context of space-based computing and satellite-network capability roadmaps \cite{isroSpaceDataCentersTheWeek2026,esaMilitaryAIEOConstellationFT2025}. Similarly, commercial providers have announced space and satellite cloud/edge offerings and supporting ground-to-space service infrastructure \cite{awsGroundStation,microsoftAzureSpaceOffPlanet2021,axiomAwsRelease2023}. Considering the vast computational resources circling in orbit across the earth, such an in-orbit distributed or edge computing platform can host a myriad of applications, including disaster response, situational awareness for space traffic management, and onboard AI-based filtering and prioritization for Earth-observation data \cite{yin2025orbitalEdgeComputingSurvey}.

 A key challenge in harnessing LEO satellite constellations for networking, or in-orbit computing, arises from the inherent temporal dynamics of LEO satellite networks. High-speed orbital motion induces time-varying ISL characteristics that lead to dynamic topology changes as satellites traverse deterministic orbital configurations. This predictability enables forward-looking planning and routing mechanisms that explicitly incorporate time dependence \cite{cen2024satflow,ron2025dteg}. By leveraging the predictable movement of satellites, the network topology for future time slots can be accurately forecasted, and such satellite networks can be represented naturally as time-evolving (temporal) graphs where edges appear and disappear over time \cite{longa2023tmlr,klobas2024sand}.

Such a temporal graph can be derived as an ordered list or a stream of timed events, such as additions or deletions of edges as connectivity becomes feasible or infeasible. Thus, the network's operational time can be represented as a temporal graph consisting of several static graphs, or \textit{snapshots}, that store the active edges for each time point in the lifetime of the graph. While a static graph representation for a highly dynamic network topology scenario as encountered in satellite networks suffers from loss of topology-change information, a temporal graph representation with appropriate time granularity is able to capture the full dynamics of the satellite constellation \cite{longa2023tmlr}. 

\subsection{Basic Idea}
We first illustrate the utility of temporal graphs for various application use-cases, followed by how our proposed \textsl{all-pairs shortest path algorithm} can benefit space-based networking and computing.


Furthermore, we often need to distinguish between the distance from a vertex $p$ to another vertex $q$ in terms of the number of traversed edges and the actual time taken to move from $p$ to $q$. This is because one may have to wait at some intermediate vertex for a long time to get the proper connecting edge of the path from $p$ to $q$. Accordingly, multiple temporal-path objectives are used in modern temporal-graph formulations, including shortest-duration (fastest), earliest-arrival, and shortest-length variants \cite{cauvi2025foremost,klobas2024sand,jain2024itg}.


As in-orbit networking and computing becomes a reality, central to the viability of such an infrastructure is efficient computation of up-to-date knowledge of the temporal graph topology characteristics. Accordingly, the \textsl{all-pairs shortest path algorithm} (APSP) is an attractive candidate that can pre-calculate the shortest path between all pairs of routers. Results of APSP computation can be used to populate routing tables. As noted earlier, a temporal graph induced by satellites in LEO exhibits periodic structure at the orbit timescale; for example, representative LEO orbits such as the ISS complete one revolution in roughly $\sim$90 minutes, illustrating the short recurrence times that motivate orbit-window precomputation. Thus, by executing a temporal all-pairs shortest path algorithm over snapshots captured over one (or multiple) orbital periods, one can pre-calculate temporal shortest paths between any pair of satellites. The resulting temporal shortest paths or routing tables can then be uploaded to the satellites from the ground stations so that whenever a communication request is generated, the appropriate temporal shortest path can be used for near-instantaneous path determination. Moreover, in the event a satellite fails or an ISL's properties change significantly, pre-computing all-pair temporal shortest paths for contingency scenarios can allow the network to quickly reroute traffic and avoid significant overhead.

\subsection{Our Contribution}

This paper addresses the computation of the shortest time paths between all pairs of vertices in the temporal graph representation of satellite networks. We present an elegant approach for this purpose which essentially involves the multiplication of specific matrices at the core of our proposed algorithm. The rest of the paper is organized as follows: Section \ref{sec:Rw} presents a brief literature survey. In Section \ref{sec:model}, we present the preliminaries and network model. Section \ref{sec:meth} presents the methodology. Section \ref{sec:algor} describes the proposed algorithm. We conclude with the areas of further exploration of this problem in Section \ref{sec:con}.

\section{Related Work} \label{sec:Rw}

Recent work has continued to expand the algorithmic toolkit for temporal-path problems beyond classical earliest-arrival formulations. Jin et al.~\cite{jin2023mcpe} study enumerating temporal $s$--$t$ paths under multiple constraints using temporal subgraph construction and pruning, highlighting the computational challenges created by time-order constraints and large search spaces. Jain and Sahni~\cite{jain2024itg} provide polynomial-time shortest-path algorithms for \emph{interval temporal graphs}, demonstrating that stronger temporal structure can yield tractable shortest-path computation beyond general contact-sequence models.  Parallel and hardware-accelerated temporal-path computation has also advanced. Mithinti et al.~\cite{mithinti2024gpu} propose GPU-based parallel algorithms for the fastest-path-duration problem in temporal graphs, indicating that matrix-/scan-like temporal primitives can be mapped effectively to massively parallel architectures. Complementing this, Yu et al.~\cite{yu2025fastgt} address the overhead of transforming temporal graphs into DAG-style representations for execution, proposing redundancy-aware GPU transformation that improves end-to-end performance of temporal analytics pipelines. These works collectively motivate designs that (i) preserve temporal feasibility exactly (e.g., run/interval constraints) and (ii) expose regular computation patterns suitable for acceleration. Beyond shortest-path computation, recent theoretical work has revisited \emph{periodicity} and feasibility constraints that arise naturally in recurring systems. Klobas et al.~\cite{klobas2024sand} initiate the temporal-graph realization problem with respect to fastest-path durations, including the periodic setting. Their results reinforce that periodic temporal structure is both practically relevant and algorithmically nontrivial, motivating methods that exploit periodicity without collapsing dynamics into static aggregation. In parallel, broad surveys such as Longa et al.~\cite{longa2023tmlr} summarize the state of temporal graph learning, emphasizing that temporal edges fundamentally change reachability and path semantics, and that realistic dynamic systems require time-aware modeling choices.

For LEO mega-constellations, multiple recent studies focus on coping with topology dynamics, laser-ISL operational constraints, and scalable planning. Bhattacharjee et al.~\cite{bhattacharjee2024taes} consider \emph{on-demand} routing when dynamic laser inter-satellite links incur non-negligible setup delays, presenting optimization and heuristic schemes evaluated on Starlink-like constellations. Ron et al.~\cite{ron2025dteg} propose a dynamic time-expanded-graph (DTEG) based approach for time-dependent topology optimization under multi-objective criteria (capacity, latency, churn), reflecting growing interest in explicitly time-dependent design objectives.  At a higher planning layer, Cen et al.~\cite{cen2024satflow} introduce SatFlow, a scalable framework that jointly plans topology, traffic allocation, and terminal power allocation for mega-constellations, addressing both dynamics and operational cost at scale. Demand-aware topology design has also been explored in very recent work: Dehshali et al.~\cite{dehshali2026starfield} propose a traffic-driven heuristic (Starfield) to orient ISLs toward non-uniform demand patterns, reporting improvements over baseline topologies in packet-level simulation.

Complementing routing and planning, link-configuration and feasibility modeling continues to evolve. Ren et al.~\cite{ren2025islalloc} study ISL allocation as a multi-objective optimization problem under dynamic topology and resource limits. At the physical-layer/network interface, Gioulis et al.~\cite{gioulis2025photonics} present a constellation-aware optical ISL model that evaluates power requirements across an orbital period, supporting more realistic feasibility constraints for temporal adjacency. Finally, learning-based routing in dynamic LEO settings has also gained traction; Riyad et al.~\cite{riyad2025qlearning} report Q-learning based routing over snapshot-derived topologies, reflecting a trend toward adaptive routing under time-varying connectivity.

Taken together, recent literature shows (i) renewed algorithmic attention to temporal shortest/fastest path computation and its acceleration, and (ii) rapidly growing interest in time-dependent routing/planning for LEO mega-constellations with laser ISLs. Our work targets the intersection of these themes by focusing on all-pairs shortest \emph{communication-time} paths under discrete-time feasibility constraints, with an implementation structure intentionally aligned with parallelizable primitives.

\section{Network model and Assumptions}
\label{sec:model}
This section introduces the basic preliminaries of the network model and the assumptions.
\subsection{Preliminaries}
We introduce a model of the temporal graph to be used for our later discussions, which is formulated around the concepts of time-varying connectivity in a temporal graph as given in \cite{nicosia2013}, but with some modifications. The following components describe our model:

\subsubsection{Temporal graph representation}
We denote a temporal graph as $G = (V,E_t)$, where $V  = \{ 1, 2, \cdots, n\}$ is the set of $n$ vertices in the graph and $E_t$ is the set of time-varying edges considered over a time window of size $T$ starting from time $t = 0$ to $t = T$. This time window of observation will be represented as $W = [0,T]$.

\subsubsection{Time Window}
In general, a temporal edge $e_{pq} \in E_t$  connecting two vertices $p$ and $q$ during this time window $W$ will be represented by a tuple $(p, q, [t_1, \Delta t_1], [t_2, \Delta t_2], \cdots, [t_k, \Delta t_k])$, where the vertices $p$ and $q$ are connected only during the non-overlapping time durations $[t_i, t_i+ \Delta t_i], i = 1, 2,\cdots, k$ with the time ordering of $0 \leq t_1 < t_1 + \Delta t_1 < t_2 <  t_2 + \Delta t_2 < \cdots < t_k < t_k + \Delta t_k \leq T$. An example of such time-varying connections is shown by the horizontal line segments in Fig.~\ref{connection-timings} for $k = 3$. We refer to these time durations $[t_i, t_i+ \Delta t_i], i = 1, 2,\cdots, k$ as the {\em connectivity durations} for the vertices $p$ and $q$ during the observation window $[0,T]$, and denote these by $CD_{pq}[0,T]$ or simply $CD_{pq}$ when the time window $[0,T]$ is implied from the context.
\begin{figure}
	\centering
	\includegraphics[width=\linewidth]{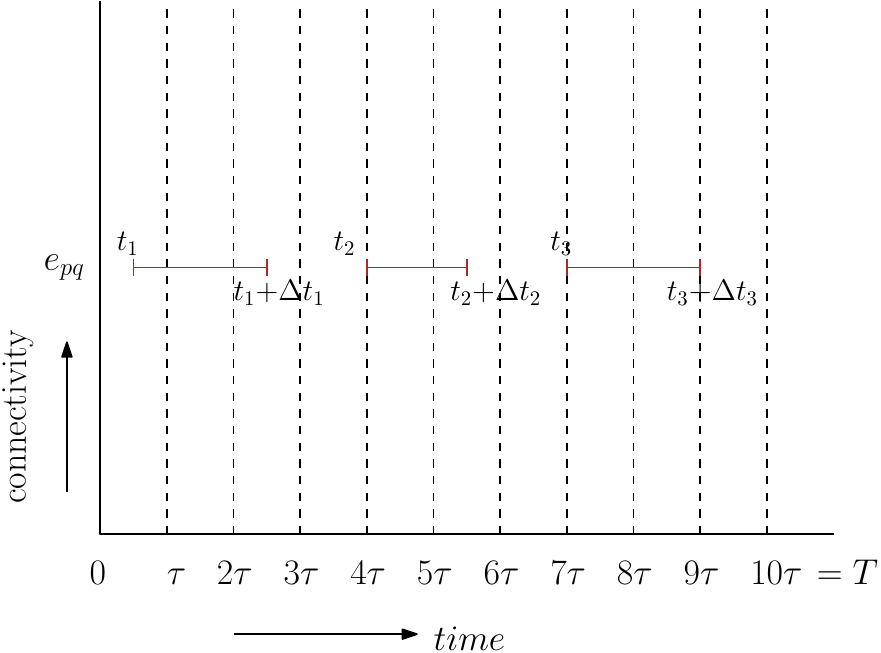}
	\caption{Connectivity durations of edge $e_{pq}$}
	\label{connection-timings}
\end{figure}

\subsubsection{Time slots for each snapshot}
 We divide the time window $[0,T]$ in equal intervals of time duration $\tau$ (refer to Fig.~\ref{connection-timings} for a diagrammatic representation), where a snapshot of the connectivity is taken during each such interval. Each interval's length $\tau$ may be set to the most petite possible duration used to observe the graph's time-varying nature with maximum temporal resolution. The exact value of $\tau$ may be chosen depending on the requirements of typical real-life applications, e.g.,  phone calls, email communications, etc., for which the temporal graph model will be used.

 \begin{example}
 For example, the message transmission delay can be computed as $\cal{D}=\frac{Q_m}{B}$, where $Q$ is the size of the message $m$ and $B$ is the bit rate of a channel in a particular band \cite{shah2019x}.
 Therefore, consider a node $v_i$ starts the
transmission of message $m$ of size $10MB$ at a time slot $t$ over GSM band $20MHz$. Then, the communication latency can be computed as:  ${\cal{D}}=\frac{10}{20}=0.5$ seconds. But in case the same message will be sent through GSM Band $10MHz$, then ${\cal{D}}$ will be ${\cal{D}}=\frac{10}{20}=1$ seconds.
Hence, we consider the value of $\tau$ depending on the current availability of the channel bandwidth and the size of the message to be sent. In this example, $\tau$ will be set to one second, and for each second, we take the snapshot accordingly.
\label{ex:td}
\end{example}

 We assume that  $T = M \tau$, so that there will be a total of $M$ snapshots, numbered as $0, 1, 2, \cdots, M-1$ taken during these $M$ intervals $[0, \tau], [\tau, 2 \tau], \cdots, ... [(M-1) \tau, T]$, respectively to provide the total connectivity information of every edge of the temporal graph during the time window $[0,T]$ under consideration. An interval $I = j$ will refer to the time duration from $j\tau$ to $(j+1)\tau$.

 Two vertices $p$ and $q$ will be connected by an edge $e_{pq}$ during the $j^{th}$ snapshot, $j \in \{0,1,2, \cdots, M-1\}$, if there is a continuous connectivity of duration at least $\lambda \tau$ $( 0< \lambda < 1)$ during  the interval $[j \tau, (j+1) \tau]$ which also has a non-null intersection with at least one of the $k$ connectivity durations  $[t_i, t_i+ \Delta t_i], i = 1, 2,\cdots, k$ of the vertices $p$ and $q$. Note that, $\lambda$ can be a priori fixed depending on the specific application requirements.

\subsubsection{Connectivity String}
The connectivity information during the time window $[0,T]$ from the vertex $p$ to the vertex  $q$ of the temporal graph will then be represented by a binary string of length $M$, termed as the {\em connectivity string}, in which each bit corresponds to a snapshot interval so that a bit will be set to 1 if the connectivity between $p$ and $q$ was there during the corresponding time interval, and set to 0, otherwise.

Note that in each snapshot interval, if $\tau$ is greater or equal to a certain value, then only we consider the connection set the value 1. For example, considering the Example \ref{ex:td}, if $\tau\geq 1$, then we set the bit value 1 other 0.

This connectivity string from $p$ to $q$ during the time window $[0,T]$ will be denoted by $CS_{pq}[0,T]$ or simply $CS_{pq}$ when the time window $[0,T]$ is implied from the context. We assume that every edge of the temporal graph is symmetric (undirected) and hence, $CS_{pq}[0,T] = CS_{qp}[0,T]$.

\begin{example} \label{example1}
Referring to the timing diagram  for connection as given in Fig.~\ref{connection-timings} with $M = 10$, the connectivity between the vertices $p$ and $q$ is expressed by the 10-bit binary string 1110110110 (the most significant bit corresponds to the snapshot interval $[0, \tau]$ ).
\end{example}

\subsection{Assumtions}
\begin{itemize}
    \item  We assume that a finite amount of time is needed to traverse every edge in the path, and the snapshot interval $\tau$ is so small that this time to traverse an edge is always more than $\tau$. Without loss of generality, we assume for simplicity that this time to traverse an edge can be expressed as some non-zero integral multiple of $\tau$.

\item During the whole time window $[0,T]$, the distance $d_{pq}$ from $p$ to $q$, as well as the time $t_{pq}$ to traverse the edge $e_{pq}$ from  $p$ to $q$ (when it exists throughout the period $t_{pq}$) will be assumed to be constant. Given these information, it may be noted that when we start traversing the edge $e_{pq}$ at any time during the interval $I = j$, i.e., during $[j\tau, (j+1)\tau]$, this connecting edge $e_{pq}$  must exist continuously during the interval $[j\tau, (j + t_{pq})\tau]$  for the completion of the traversal.

\item A temporal graph will be represented by an $n \times n$ adjacency matrix $A$ whose $(p,q)^{th}$ entry will be a 3-tuple given as $A(p,q) = (CS_{pq}, d_{pq}, t_{pq})$.
\end{itemize}

\section{Methodology}
\label{sec:meth}
Consider a temporal path $p_1p_2 \cdots p_m$ from a vertex $p_1$ to another vertex $p_m$ that exists in the temporal graph $G = (V, E_t)$ during the time window $W = [0,T]$ under consideration.
After reaching a vertex $p_j, j = 2, 3, \cdots, m-1$ along the path from the vertex $p_{j-1}$, a connection from $p_j$ to $p_{j+1}$ following the trail should be continuously available (without any break) in any subsequent $t_{p_j p_{j+1}}$ snapshot intervals. To satisfy this requirement, one may have to wait at the vertex $p_j$ for some time until the required connectivity to the vertex $p_{j+1}$ is guaranteed. We illustrate this situation by the following example.

\begin{example}\label{example2}
Consider a temporal graph with five vertices for which the vertex set is $V = \{1,2, 3, 4, 5\}$. The connecting edges are time-varying in nature with the aggregated static graph as shown in Fig.~\ref{example-static-graph} during an observation window $[0,10\tau]$. The time varying nature of the connecting edges $e_{12}, e_{23}, e_{14}, e_{43}, e_{25}, e_{35}$ and $e_{45}$ are given by the following connectivity strings, each of length 10:
\begin{figure}
	\centering
	\includegraphics[width=0.8\linewidth]{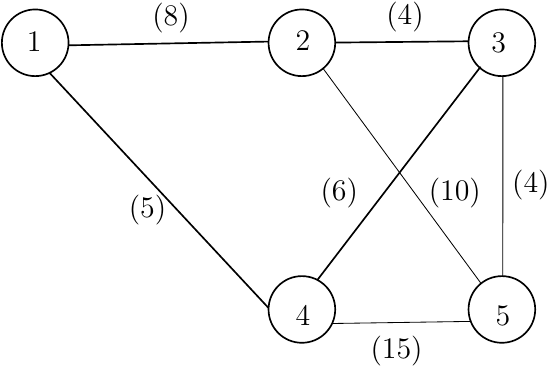}
	\caption{Aggregated static graph for the example situation}
	\label{example-static-graph}
\end{figure}
\begin{align*}
Time~intervals~ (I)~ :~  &0  &1  &2  &3  &4  &5  &6  &7  &8  &9 \\
CS_{12}~ =~           &0  &0  &0  &1  &1  &1  &0  &1  &1  &1 \\
CS_{23}~ = ~          &0  &1  &1  &1  &0  &1  &1  &0  &0  &0 \\
CS_{14}~ =~           &1  &1  &0  &1  &1  &1  &0  &1  &1  &0 \\
CS_{43}~ = ~          &0  &0  &0  &0  &1  &1  &0  &1  &1  &0 \\
CS_{25}~ = ~          &0  &0  &0  &1  &1  &1  &1  &1  &0  &0 \\
CS_{35}~ = ~          &1  &1  &1  &0  &0  &0  &0  &1  &1  &0 \\
CS_{45}~ = ~          &0  &0  &0  &0  &0  &0  &0  &1  &1  &1 \\
\end{align*}

Let two time units (each unit is equal to the length of an interval) be needed to traverse any of these four edges, and we need to reach the vertex 3 from vertex 1.  One possibility is that we start at interval $I = 0$ to reach vertex 4 traversing the edge $e_{14}$ during the intervals 0 and 1, wait there for two time intervals and  then traverse the edge $e_{43}$ during the intervals 4 and 5. Thus, we reach vertex 3 at interval $I = 5$ and the total time of travel is 6 units.

Another possibility is that we start from vertex 1 at interval $I = 3$, reach vertex 2 at interval $I = 4$, and then reach vertex 3 at $I = 6$. In this case, the total time of travel is only 4 units although the arrival time is at interval $I = 6$.

Yet another possibility exists as follows : start from vertex 1 at interval $I = 4$ to reach vertex 4 at interval $I = 5$, wait at vertex 4 for one interval and then reach vertex 3 at $I = 8$, with a total travel time of 5 units.

The first of these three schemes is an example for reaching vertex 3 from vertex 1 by following  the earliest arrival time path, the second one is by following the fastest path or shortest time path, and the third one is by following the latest departure time path.

If the distances between the vertices  are given as $d_{12} = 8, d_{23} = 4, d_{14} = 5$, and $d_{43} = 6$ (shown in Fig.~\ref{example-static-graph} as the labels of the corresponding edges), then the path 1-4-3 is the shortest distance path from vertex 1 to vertex 3.
\end{example}
While traversing a path from $p$ to $q$, we refer to the time of start from the initial vertex $p$ of this path by the starting interval $I^{pq}_s$ and the time of arrival at the final vertex $q$ of the path by the finishing interval $I^{pq}_f$. We represent this path by a 3-tuple $(I^{pq}_s, I^{pq}_f, pred^{pq})$, where $pred^{pq}$ is the predecessor vertex of $q$ in the path. $pred^{pq}$  will enable us to trace this path from $p$ to $q$ backward, starting from $q$. It may be noted that we would like to explore all possible simple paths only (without containing any loops) between every pair of vertices. However, there may be more than one such possible paths with different values of $pred^{pq}$. Further, it may also be possible that  with the same $pred^{pq}$, there may be different values of $I^{pq}_s$ and/or $I^{pq}_f$ defining a path from $p$ to $q$. We construct a path matrix $P$ whose $(p,q)^{th}$ entry will contain information about all such possible contact periods and will be represented as $(\nu_{pq}, (I^{pq}_s[1], I^{pq}_f[1], pred^{pq}[1])$, $(I^{pq}_s[2], I^{pq}_f[2], pred^{pq}[2])$, $\cdots$, $(I^{pq}_s[\nu_{pq}], I^{pq}_f[\nu_{pq}], pred^{pq}[\nu_{pq}]))$, where $\nu_{pq}$ is the number of possible paths from $p$ to $q$, and the 3-tuple $(I^{pq}_s[i], I^{pq}_f[i], pred^{pq}[i])$ contains information about the starting interval, finishing interval and the corresponding predecessor vertex of $q$ in the $i^{th}$ path, $1 \leq i \leq \nu_{pq}$.
As special cases, the entries of $P(p,p), p \in V$ will be set to $(0,(0,0,0))$ and also when there will be no path from $p$ to $q$, the entries of $P(p,q)$ will be set to $(0,(0,0,\infty))$.

With the above idea, the path matrix $P$ is first initialized based on 1-adjacency of the vertices, following the algorithm $initialize\_path\_matrix$ as shown in Procedure \ref{proc:init}. After that, to find the all-pairs shortest time paths, we would augment each entry of the path matrix with  information about other possible contacts (as expressed through a 3-tuple $(I^{pq}_s[i], I^{pq}_f[i], pred^{pq}[i])$), if any, through more than one edge. Thus, the $(p,q)^{th}$ entry of $P$ will finally contain information about all possible paths from $p$ to $q$ indicated by the $\nu_{pq}$ 3-tuples $(I^{pq}_s[i], I^{pq}_f[i], pred^{pq}[i])$, $1 \leq i \leq \nu_{pq}$. Finally, from all these $\nu_{pq}$ tuples, the one with the minimum value of $(I^{pq}_f[i] - I^{pq}_s[i])$ will be chosen as the path with the shortest time from $p$ to $q$.

To achieve the above objective of computing the final path matrix, we proceed as follows. Assume that we have already computed the contact information for a path from $p$ to $q$ with the corresponding entry in the path matrix $P(p,q)$ as  $(I^{pq}_s[i], I^{pq}_f[i], pred^{pq}[i])$. Suppose, from the vertex $q$, we would now like to reach another vertex $r$ adjacent to $q$ where $A(q,r) = (CS_{qr}, d_{qr}, t_{qr})$. It will be possible to reach the vertex $r$ from $q$ if we find some $j > I^{pq}_f[i]$ such that there is a run of $1's$ of length $t_{qr}$ in $CS_{qr}$ starting from the interval $j$, i.e., $CS_{qr}[j] = CS_{qr}[j+1] = \cdots = CS_{qr}[j+t_{qr}-1] = 1$, for some $j > I^{pq}_f[i]$. Once such a run is found,  the total time of traversal from $p$ to $r$ will be $j+t_{qr} - I^{pq}_s[i]$.
  The start time and finish time from $p$ to $r$ through $q$ will also be  accordingly evaluated as  $I^{pq}_s[i]$ and $j+t_{qr}-1$, respectively and $pred^{pr}[i]$ will be set to $q$.  The corresponding 3-tuple $(I^{pq}_s[i],j+t_{qr}-1,q)$ will then be inserted as a new 3-tuple in $P(p,r)$ in the sorted order of $(I^{pr}_s,I^{pr}_f)$ values provided there is no other existing 3-tuple with the same values of $(I^{pr}_s,I^{pr}_f)$, to avoid duplicate entries which might be generated in the above process of finding the possible contacts.

The above operation will be repeated with all the $\nu_{pq}$ tuples  $(I^{pq}_s[i], I^{pq}_f[i], pred^{pq}[i])$, $1 \leq i \leq \nu_{pq}$.
 Collectively, we denote all these operations symbolically by $P(p,q) \otimes A(q,r)$.  The algorithm for computing $P(p,q) \otimes A(q,r)$ is presented in the form of  $update\_all\_possible\_contacts(A,P,p,r)$ as given in  Procedure \ref{proc:apc}.

\renewcommand*{\algorithmcfname}{Procedure}
\setcounter{algocf}{0}
\begin{algorithm}[!h]
\scriptsize
 \KwIn{ $A,P,q,r,start,finish,flag$}
 \KwOut{Updated $start, finish, flag$}
 $count = 0$\;
    $flag= 0$\;
    $ j = start$\;
    \tcp{next run of length $t_{qr}$ or more}
    \While{$j <= M $ AND $ flag== 0$}
    {
        \eIf{$CS_{qr}[j] == 1$}
        {
            $count++$\;
            \If{$count == t_{qr}$ \tcp{a run of length $t_{qr}$ just found}}
            {
                $start = j - t_{qr} + 1$\;
            }
        }
        {
            \If{$count \geq t_{qr}$ \tcp{a run of length $t_{qr}$ or greater is found}}
            {
                $flag= 1$\;
                $finish = j-1$\;
                $break$\;
            }
            $count = 0$\;
        }
        $j++$\;
    }
\caption{\small $find\_next\_run(A,P,q,r,start,finish,flag)$}
\label{proc:forward}
\end{algorithm}

\begin{algorithm}[!ht]
\scriptsize
 \KwIn{ Adjacency matrix $A$}
 \KwOut{Path matrix $P$}
 \For{$p = 1$ to $n$}
    {
       \For{$r = 1$ to $n$}
        {
        	    $\nu_{pr} = 0$\;
                $I^{pr}_s[1] = 0$\;
                 $I^{pr}_f[1] = 0$\;
        		\eIf{$p == r$\tcp{diagonal entry}}
        		{
        		
        		     $pred^{pr}[1] = 0$\;
                }
                {
                 \tcp{for non diagonal entry}
                  $pred^{pr}[1] = \infty$\;
	               $start= 0$\;
                $finish=0$\;
                $k = 0$\;
                \While{$k<M-t_{pr}$}
                {
                    $find\_next\_run(A,P,p,r,start,finish,flag)$\tcp*{Execute Procedure \ref{proc:forward}}
                    \If {$flag== 1$}
                    {
                        $temp = start+ t_{pr} - 1$\;
                        \While{ $temp \leq finish$}
                        {
                           $k++$\;
                           $I^{pr}_s[k] = start$\;
                           $I^{pr}_f[k] = temp$\;
                           $start++$\;
                           $temp++$\;
                        }
                        $\nu_{pr} = k$\;
                        \For{$i = 1$ to $\nu_{pr}$}
                        {
                            $pred^{pr}[i] = p$\;
                        }
                    }
                }

            }
        }
    }
\caption{\small $initialize\_path\_matrix(A,P)$}
\label{proc:init}
\end{algorithm}

\begin{algorithm}[!h]
\scriptsize
 \KwIn{ $A,P,p,r$}
 \KwOut{Updated $P$}

 \For{$q = 1$ to $n$}
    {
    	
        \If{$q \neq r$}
        {
            \If{$\nu_{pq} > 0$ AND $\nu_{qr}> 0$}
            {
                $k = 1$\tcp*{$k$ indicates the position of the interval in $P(q,r)$ from left to right direction}
                $m=\nu_{pr}$\;
                \While{$k \leq \nu_{qr}$}
                {
                  \tcp{Looking for earliest connection interval in $P(q,r)$}
                    $flag_1= 0$\;
                    $j= \nu_{pq}$\tcp*{$j$ indicates the position of the interval in $P(p,q)$ traversed from right to left in the list}
                    \While{$j\geq 1$ AND $flag_1==0$}
                    {
                        \If{$ I^{qr}_s[k] > I^{pq}_f[j]$ }
                        {
                            $flag_2=0$\;

                            \For{ $i=1$ to $m$}
                            {
                            \tcp{avoid duplicate entries and entries with larger duration for a given finish time interval}
 								\eIf{$I^{pr}_f[i]==I^{qr}_f[k]$}
                                {
                                	\If{$I^{pr}_s[i]<I^{pq}_s[j]$}
                                	{
                                		$I^{pr}_s[i]=I^{pq}_s[j]$\;
                                		$pred^{pr}[i]=q$\;
                                	}
                                		$flag_2=1$\;
                                	
                                }
                                {
                                    $flag_2=1$\;
                                 }
                            }
							 \If{$flag_2==0$}
                               {
                                $m++$\;
                                $I^{pr}_s[m] = I^{pq}_s[j]$\;
                                $I^{pr}_f[m] = I^{qr}_f[k]$\;
                                $pred^{pr}[m] = q$\;
                                $flag_1= 1$\;
                                Insert this entry $(I^{pr}_s[m],I^{pr}_f[m],pred^{pr}[m])$ in $P(p,r)$ in sorted order with respect to the finish time $I^{pr}_f[m]$;
                            }

                       }
                        $j--$\;
                     }
                    $k++$\;
                }

                   $ \nu_{pr} = m$\;

            }

        }
    }
\caption{$update\_all\_possible\_contacts(A,P,p,r)$}
\label{proc:apc}
\end{algorithm}

With these preliminary ideas, we describe below an algorithm for computing all-pairs shortest time paths as described in the next section.

\section{Algorithm for Computing all-pairs Shortest Time Paths}
\label{sec:algor}

All-pairs shortest paths in static graphs can be very conveniently computed in two steps of a modified matrix multiplication scheme as proposed by Hu \cite{hu1967}. Let $D$ be the $n \times n$ distance matrix of the static graph in which the $(p,q)^{th}$ element $d_{pq}$ is set to the distance from the vertex $p$ to vertex $q$. The concept given in \cite{hu1967} is to compute the product (in some sense) of the matrix $D$ by itself ($D \times D$)  through in-place computation following some well-defined rules - i) an entry of the product matrix once evaluated will immediately replace the corresponding entry of only the left $D$ matrix in the above multiplication process, ii) the entries of the product matrix are computed once through a forward multiplication process and then through a backward multiplication process. In the forward multiplication process, the entries of the product matrix are evaluated in row-major order from row 1 to row $n$, in each row from left to right. In the backward multiplication process, entries are evaluated from row $n$ to row 1, in each row from right to left.

To compute the all-pairs shortest time paths in a temporal graph, we adopt the basic concept of in-place ordered matrix multiplication in Hu's algorithm mentioned above and introduce appropriate modifications on it to deal with the time-varying nature of the connecting edges. First, the matrix multiplication is equivalently replaced by an operation where, in computing an element of the product matrix, the multiplication of two elements of a matrix is replaced by $\otimes$ operation between $P$ and $A$ and second, the summation of these computed products is replaced by augmenting each of the $(p,q)^{th}$ entries, $1 \leq p, q \leq n$ of the product matrix with the newly generated 3-tuples containing information of additional contacts from $p$ to $q$ during the period under consideration. Finally, from each such $(p,q)^{th}$ entry of the product matrix, the 3-tuple with the minimum time between the start and finish intervals is chosen to find the shortest time path from $p$ to $q$. Our proposed algorithm for computing all-pairs shortest time paths of a temporal graph is shown in Algorithm \ref{all-pairs-st}, where we essentially perform the in-place computation of the path matrix $P(p,r), 1 \leq p,r\leq n$ through a forward and backward process of this equivalent matrix multiplication operation.

In this algorithm, we first initialize the path matrix $P(p,r), 1 \leq p,r \leq n$ by executing Procedure \ref{proc:init}. In this procedure, for each diagonal entry ($p=r$), we initialize $\nu_{pr}=I^{pr}_s[1]=I^{pr}_f[1]=pred^{pr}[1] = 0$. For non diagonal entry ($p\neq r$), we call $find\_next\_run$ (Procedure \ref{proc:forward}) to find all the consecutive runs of length $t_{pr}$ from $CS_{pr}\in A$, and set $\nu_{pr}=k$, where $k$ is the total number of runs of length $t_{pr}$, $1\leq k\leq M-t_{pr}$, and update all the entries $I^{pr}_s[k],I^{pr}_f[k],$ and $pred^{pr}[k]$, accordingly. Once we initialize the path matrix $P$, we start the forward and backward process by  executing $update\_all\_possible\_contacts$ (Procedure \ref{proc:apc}). In this procedure, for each entry of $P(p, r)$, we find all possible paths through the vertex $q$, $1\leq q\leq n$ and $q\neq r$, avoiding any duplicate entry ans also entries with larger duration for a given finish time interval $I^{pr}_f$. After computing all the intervals of $P(p,r)$, we call Procedure \ref{proc:min} to find the minimum time path among all the intervals in $P(p,r)$. Finally, we set the interval number  $\nu_{pr} = 1$ and update $I^{pr}_s[1], I^{pr}_f[1], pred^{pr}[1]$ by the minimum time path accordingly.
 Thus, Algorithm \ref{all-pairs-st} computes all-pairs shortest time communication paths in 6G networks based on temporal graph representation.

\begin{algorithm}[!h]
\scriptsize
 \KwIn{ $P$}
 \KwOut{Updated $P$}
 \For{$p = 1$ to $n$}
    {
       \For{$r = 1$ to $n$}
        {
            \If{$\nu_{pr} > 0$}
            {
                $min\_time = I^{pr}_f[1] - I^{pr}_s[1] + 1$\;
                $temp\_start = I^{pr}_s[1]$\;
                $temp\_finish = I^{pr}_f[1]$\;
                $temp\_predecessor = pred^{pr}[1]$\;
                $previous\_node = pred^{pr}[1]$\;
                \For{$j = 2$ to $\nu_{pr}$}
                {
                       $temp\_duration = I^{pr}_f[j] - I^{pr}_s[j] + 1$\;
                        \If{$temp\_duration < min\_time$}
                        {
                            $min\_time = temp_duration$\;
                            $temp\_start = I^{pr}_s[j]$\;
                            $temp\_finish = I^{pr}_f[j]$\;
                            $temp\_predecessor = pred^{pr}[j]$\;
                        }
                }
                $\nu_{pr} = 1$\;
                $I^{pr}_s[1] = temp\_start$\;
                $I^{pr}_f[1] = temp\_finish$\;
                $pred^{pr}[1] = temp\_predecessor$\;
            }
        }
    }
\caption{\small $find\_minimum\_time\_path(P)$}
\label{proc:min}
\end{algorithm}

\renewcommand*{\algorithmcfname}{Algorithm}
\begin{algorithm}[!ht]
\setcounter{algocf}{0}
\scriptsize
 \KwIn{ Adjacency matrix $A$}
 \KwOut{Path matrix $P$}
 \tcp{\bf Initialize path matrix with 1-adjacent information:}
 $initialize\_path\_matrix(A,P)$ \tcp*{Execute Procedure \ref{proc:init}}
\tcp{\bf Forward:}
 \For{$p = 1$ to $n$}
    {
       \For{$r = 1$ to $n$}
        {
            \If{$p \neq r$}
            {
           $update\_all\_possible\_contacts(A,P,p,r)$\tcp*{Execute Procedure \ref{proc:apc}}
            }
        }
    }
\tcp{\bf Backward:}
 \For{$p = n$ to $1$}
    {
       \For{$r = n$ to $1$}
        {
            \If{$p \neq r$}
            {
           $update\_all\_possible\_contacts(A,P,p,r)$\tcp*{Execute Procedure \ref{proc:apc}}
            }
        }
    }
\tcp{Compute shortest time path for each $P(p,q)$ from all its contacts}
$find\_minimum\_time\_path(P)$\tcp*{Execute Procedure \ref{proc:min}}
\caption{\small $all\_pair\_shortest\_time\_paths(A,P)$}
\label{all-pairs-st}
\end{algorithm}

\begin{example}\label{example3}

Consider again the graph used in Example \ref{example2} with the same connectivity strings. While executing Algorithm~\ref{all-pairs-st}, the path matrix is initialized as follows:

{
\scriptsize
\[
 P=\kbordermatrix{
    & 1 & 2 & 3 & 4 & 5 \\
  1  \hspace{-0.15cm} & \hspace{-0.15cm} \big(0,(0,0,0)\big)\hspace{-0.15cm} & \hspace{-0.15cm} \big(0,(0,0,\infty)\big)\hspace{-0.15cm} & \hspace{-0.15cm} \big(0,(0,0,\infty)\big)\hspace{-0.15cm} & \hspace{-0.15cm} \big(0,(0,0,\infty)\big)\hspace{-0.15cm} & \hspace{-0.15cm} \big(0,(0,0,\infty)\big)\hspace{-0.15cm} \\
  2  \hspace{-0.15cm} & \hspace{-0.15cm}  \big(0,(0,0,\infty)\big)\hspace{-0.15cm} & \hspace{-0.15cm} \big(0,(0,0,0)\big)\hspace{-0.15cm} & \hspace{-0.15cm} \big(0,(0,0,\infty)\big)\hspace{-0.15cm} & \hspace{-0.15cm} \big(0,(0,0,\infty)\big)\hspace{-0.15cm} & \hspace{-0.15cm} \big(0,(0,0,\infty)\big)\hspace{-0.15cm} \\
  3  \hspace{-0.15cm} & \hspace{-0.15cm} \big(0,(0,0,\infty)\big)\hspace{-0.15cm} & \hspace{-0.15cm} \big(0,(0,0,\infty)\big)\hspace{-0.15cm} & \hspace{-0.15cm} \big(0,(0,0,0)\big)\hspace{-0.15cm} & \hspace{-0.15cm} \big(0,(0,0,\infty)\big)\hspace{-0.15cm} & \hspace{-0.15cm} \big(0,(0,0,\infty)\big) \hspace{-0.15cm}\\
  4  \hspace{-0.15cm} & \hspace{-0.15cm}  \big(0,(0,0,\infty)\big)\hspace{-0.15cm} & \hspace{-0.15cm} \big(0,(0,0,\infty)\big)\hspace{-0.15cm} & \hspace{-0.15cm} \big(0,(0,0,\infty)\big)\hspace{-0.15cm} & \hspace{-0.15cm} \big(0,(0,0,0)\big)\hspace{-0.15cm} & \hspace{-0.15cm} \big(0,(0,0,\infty)\big)\hspace{-0.15cm} \\
  5  \hspace{-0.15cm} & \hspace{-0.15cm}\big(0,(0,0,\infty)\big)\hspace{-0.15cm} & \hspace{-0.15cm} \big(0,(0,0,\infty)\big)\hspace{-0.15cm} & \hspace{-0.15cm} \big(0,(0,0,\infty)\big)\hspace{-0.15cm} & \hspace{-0.15cm} \big(0,(0,0,\infty)\big)\hspace{-0.15cm} & \hspace{-0.15cm} \big(0,(0,0,0)\big)\\ [1ex]
}
\]

}

\noindent After initialization, the entries of the path matrix $P(p,q)$ are as follows:\\
{
        $~~~~P(1,1) =  \big( 0, (0, 0, 0)\big)$\\
        $~~~~P(1,2) =  \big( 4, (3, 4, 1) (4, 5, 1) (7, 8, 1) (8, 9, 1)\big)$\\
        $~~~~P(1,3) =  \big( 0, (0, 0, \infty)\big)$\\
        $~~~~P(1,4) =  \big( 4, (0, 1, 1) (3, 4, 1) (4, 5, 1) (7, 8, 1)\big)$\\
        $~~~~P(1,5) =  \big( 0, (0, 0, \infty)\big)$\\
        $~~~~P(2,1) =  \big( 4, (3, 4, 2) (4, 5, 2) (7, 8, 2) (8, 9, 2)\big)$\\
        $~~~~P(2,2) =  \big( 0, (0, 0, 0)\big)$\\
        $~~~~P(2,3) =  \big( 3, (1, 2, 2) (2, 3, 2) (5, 6, 2)\big)$\\
        $~~~~P(2,4) =  \big( 0, (0, 0, \infty)\big)$\\
        $~~~~P(2,5) =  \big( 4, (3, 4, 2) (4, 5, 2) (5, 6, 2) (6, 7, 2)\big)$\\
        $~~~~P(3,1) =  \big( 0, (0, 0, \infty)\big)$\\
        $~~~~P(3,2) =  \big( 3, (1, 2, 3) (2, 3, 3) (5, 6, 3)\big)$\\
        $~~~~P(3,3) =  \big( 0, (0, 0, 0)\big)$\\
        $~~~~P(3,4) =  \big( 2, (4, 5, 3) (7, 8, 3)\big)$\\
        $~~~~P(3,5) =  \big( 3, (0, 1, 3) (1, 2, 3) (7, 8, 3)\big)$\\
        $~~~~P(4,1) =  \big( 4, (0, 1, 4) (3, 4, 4) (4, 5, 4) (7, 8, 4)\big)$\\
        $~~~~P(4,2) =  \big( 0, (0, 0, \infty)\big)$\\
        $~~~~P(4,3) =  \big( 2, (4, 5, 4) (7, 8, 4)\big)$\\
        $~~~~P(4,4) =  \big( 0, (0, 0, 0)\big)$\\
        $~~~~P(4,5) =  \big( 2, (7, 8, 4) (8, 9, 4)\big)$\\
        $~~~~P(5,1) =  \big( 0, (0, 0, \infty)\big)$\\
        $~~~~P(5,2) =  \big( 4, (3, 4, 5) (4, 5, 5) (5, 6, 5) (6, 7, 5)\big)$\\
        $~~~~P(5,3) =  \big( 3, (0, 1, 5) (1, 2, 5) (7, 8, 5)\big)$\\
        $~~~~P(5,4) =  \big( 2, (7, 8, 5) (8, 9, 5)\big)$\\
        $~~~~P(5,5) =  \big( 0, (0, 0, 0)\big)$\\

}

\noindent After the {\em Forward} step, the path matrix $P(p,q)$ is changed as follows:\\
{
 $~~~~P(1,1) =  \big( 0, (0, 0, 0)\big)$\\
        $~~~~P(1,2) =  \big( 4, (3, 4, 1) (4, 5, 1) (7, 8, 1) (8, 9, 1)\big)$\\
        $~~~~P(1,3) =  \big( 1, (3, 6, 2)\big)$\\
        $~~~~P(1,4) =  \big( 4, (0, 1, 1) (3, 4, 1) (4, 5, 1) (7, 8, 1)\big)$\\
        $~~~~P(1,5) =  \big( 1, (3, 6, 2)\big)$\\
        $~~~~P(2,1) =  \big( 4, (3, 4, 2) (4, 5, 2) (7, 8, 2) (8, 9, 2)\big)$\\
        $~~~~P(2,2) =  \big( 0, (0, 0, 0)\big)$\\
        $~~~~P(2,3) =  \big( 3, (1, 2, 2) (2, 3, 2) (5, 6, 2)\big)$\\
        $~~~~P(2,4) =  \big( 1, (5, 8, 3)\big)$\\
        $~~~~P(2,5) =  \big( 4, (3, 4, 2) (4, 5, 2) (5, 6, 2) (6, 7, 2)\big)$\\
        $~~~~P(3,1) =  \big( 1, (1, 4, 2)\big)$\\
        $~~~~P(3,2) =  \big( 3, (1, 2, 3) (2, 3, 3) (5, 6, 3)\big)$\\
        $~~~~P(3,3) =  \big( 0, (0, 0, 0)\big)$\\
        $~~~~P(3,4) =  \big( 2, (4, 5, 3) (7, 8, 3)\big)$\\
        $~~~~P(3,5) =  \big( 3, (0, 1, 3) (1, 2, 3) (7, 8, 3)\big)$\\
        $~~~~P(4,1) =  \big( 4, (0, 1, 4) (3, 4, 4) (4, 5, 4) (7, 8, 4)\big)$\\
        $~~~~P(4,2) =  \big( 1, (0, 4, 1)\big)$\\
        $~~~~P(4,3) =  \big( 2, (4, 5, 4) (7, 8, 4)\big)$\\
        $~~~~P(4,4) =  \big( 0, (0, 0, 0)\big)$\\
        $~~~~P(4,5) =  \big( 2, (7, 8, 4) (8, 9, 4)\big)$\\
        $~~~~P(5,1) =  \big( 1, (5, 8, 2)\big)$\\
        $~~~~P(5,2) =  \big( 4, (3, 4, 5) (4, 5, 5) (5, 6, 5) (6, 7, 5)\big)$\\
        $~~~~P(5,3) =  \big( 3, (0, 1, 5) (1, 2, 5) (7, 8, 5)\big)$\\
        $~~~~P(5,4) =  \big( 2, (7, 8, 5) (8, 9, 5)\big)$\\
        $~~~~P(5,5) =  \big( 0, (0, 0, 0)\big)$\\
}

\noindent 	After the {\em Backward} step, the path matrix $P(p,q)$ is changed as follows:\\
{
 $~~~~P(1,1) =  \big( 0, (0, 0, 0)\big)$\\
        $~~~~P(1,2) =  \big( 4, (3, 4, 1) (4, 5, 1) (7, 8, 1) (8, 9, 1)\big)$\\
        $~~~~P(1,3) =  \big( 1, (3, 6, 2)\big)$\\
        $~~~~P(1,4) =  \big( 4, (0, 1, 1) (3, 4, 1) (4, 5, 1) (7, 8, 1)\big)$\\
        $~~~~P(1,5) =  \big( 1, (3, 6, 2)\big)$\\
        $~~~~P(2,1) =  \big( 4, (3, 4, 2) (4, 5, 2) (7, 8, 2) (8, 9, 2)\big)$\\
        $~~~~P(2,2) =  \big( 0, (0, 0, 0)\big)$\\
        $~~~~P(2,3) =  \big( 3, (1, 2, 2) (2, 3, 2) (5, 6, 2)\big)$\\
        $~~~~P(2,4) =  \big( 1, (5, 8, 3)\big)$\\
        $~~~~P(2,5) =  \big( 4, (3, 4, 2) (4, 5, 2) (5, 6, 2) (6, 7, 2)\big)$\\
        $~~~~P(3,1) =  \big( 1, (1, 4, 2)\big)$\\
        $~~~~P(3,2) =  \big( 3, (1, 2, 3) (2, 3, 3) (5, 6, 3)\big)$\\
        $~~~~P(3,3) =  \big( 0, (0, 0, 0)\big)$\\
        $~~~~P(3,4) =  \big( 2, (4, 5, 3) (7, 8, 3)\big)$\\
        $~~~~P(3,5) =  \big( 3, (0, 1, 3) (1, 2, 3) (7, 8, 3)\big)$\\
        $~~~~P(4,1) =  \big( 4, (0, 1, 4) (3, 4, 4) (4, 5, 4) (7, 8, 4)\big)$\\
        $~~~~P(4,2) =  \big( 1, (0, 4, 1)\big)$\\
        $~~~~P(4,3) =  \big( 2, (4, 5, 4) (7, 8, 4)\big)$\\
        $~~~~P(4,4) =  \big( 0, (0, 0, 0)\big)$\\
        $~~~~P(4,5) =  \big( 2, (7, 8, 4) (8, 9, 4)\big)$\\
        $~~~~P(5,1) =  \big( 1, (5, 8, 2)\big)$\\
        $~~~~P(5,2) =  \big( 4, (3, 4, 5) (4, 5, 5) (5, 6, 5) (6, 7, 5)\big)$\\
        $~~~~P(5,3) =  \big( 3, (0, 1, 5) (1, 2, 5) (7, 8, 5)\big)$\\
        $~~~~P(5,4) =  \big( 2, (7, 8, 5) (8, 9, 5)\big)$\\
        $~~~~P(5,5) =  \big( 0, (0, 0, 0)\big)$\\
}

\noindent The final entries in path matrix $P$ are as follows:
{
\[ P=\kbordermatrix{& 1 & 2 & 3 & 4 & 5 \\      1  \hspace{-0.15cm}     & \hspace{-0.1cm}(0, 0, 0)\hspace{-0.1cm}     & \hspace{-0.1cm}(3, 4, 1)\hspace{-0.1cm}     & \hspace{-0.1cm}(3, 6, 2)\hspace{-0.1cm}     & \hspace{-0.1cm}(0, 1, 1)\hspace{-0.1cm}     & \hspace{-0.1cm}(3, 6, 2)\hspace{-0.1cm} \\
        2  \hspace{-0.15cm}     & \hspace{-0.1cm}(3, 4, 2)\hspace{-0.1cm}     & \hspace{-0.1cm}(0, 0, 0)\hspace{-0.1cm}     & \hspace{-0.1cm}(1, 2, 2)\hspace{-0.1cm}     & \hspace{-0.1cm}(5, 8, 3)\hspace{-0.1cm}     & \hspace{-0.1cm}(3, 4, 2)\hspace{-0.1cm} \\
        3  \hspace{-0.15cm}     & \hspace{-0.1cm}(1, 4, 2)\hspace{-0.1cm}     & \hspace{-0.1cm}(1, 2, 3)\hspace{-0.1cm}     & \hspace{-0.1cm}(0, 0, 0)\hspace{-0.1cm}     & \hspace{-0.1cm}(4, 5, 3)\hspace{-0.1cm}     & \hspace{-0.1cm}(0, 1, 3)\hspace{-0.1cm} \\
        4  \hspace{-0.15cm}     & \hspace{-0.1cm}(0, 1, 4)\hspace{-0.1cm}     & \hspace{-0.1cm}(0, 4, 1)\hspace{-0.1cm}     & \hspace{-0.1cm}(4, 5, 4)\hspace{-0.1cm}     & \hspace{-0.1cm}(0, 0, 0)\hspace{-0.1cm}     & \hspace{-0.1cm}(7, 8, 4)\hspace{-0.1cm} \\
        5  \hspace{-0.15cm}     & \hspace{-0.1cm}(5, 8, 2)\hspace{-0.1cm}     & \hspace{-0.1cm}(3, 4, 5)\hspace{-0.1cm}     & \hspace{-0.1cm}(0, 1, 5)\hspace{-0.1cm}     & \hspace{-0.1cm}(7, 8, 5)\hspace{-0.1cm}     & \hspace{-0.1cm}(0, 0, 0)\hspace{-0.1cm} \\
}
\]
}

\end{example}

\subsection{Proof of Correctness}

To show that Algorithm~\ref{all-pairs-st} computes the all-pairs shortest time paths correctly, we must ensure that while computing the shortest time path from a given vertex $p$ to another vertex $r$ through a vertex $q$, we use all possible paths from $p$ to $q$, and also all possible paths from $q$ to $r$, for all values of $q$. To prove that this condition is satisfied by executing the {\em forward} process followed by the {\em backward} process in the algorithm, we proceed as follows.

We denote the $(p,q)^{th}$ entry of the path matrix $P$ after executing the forward process and the backward process by $P^F(p,q)$ and $P^B(p,q)$, respectively. We say that an entry $P^F(i,j)$ < $P^F(k,l)$ if $P^F(i,j)$ is calculated before $P^F(k,l)$ in the forward process, and similarly, $P^B(i,j)$ < $P^B(k,l)$ if $P^B(i,j)$ is calculated before $P^B(k,l)$ in the backward process.

A vertex in a path is said to be a minimum vertex if it is less than its predecessor and successor vertices in the path. A sequence of vertices $p_1, p_2, \cdots, p_k$ in a path is said to form i) an upward sequence if $p_1 < p_2 < \cdots < p_k$, ii) a downward sequence if  $p_1 > p_2 > \cdots > p_k$, and iii) a valley sequence if there is only one minimum vertex in the sequence and $min(p_1,p_k) > max_j(p_j), j = 2, 3, \cdots, k-1$.

We now have the following lemma.

\begin{lemma}\label{lemma1}
If the vertices which are in a shortest time path $p_1 p_2 \cdots p_k$ from a vertex $p_1$ to a vertex $p_k$ form an upward sequence or a downward sequence, then this shortest time path from $p_1$ to $p_k$ will be computed in the forward process of Algorithm~\ref{all-pairs-st}.
\end{lemma}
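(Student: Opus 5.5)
We argue by induction on the number of edges of the path, following Hu's in-place argument. The aim is to show that the entry $P(p_1,p_j)$ contains a tuple whose start interval is that of the given shortest time path and whose finish interval is that of its prefix $p_1p_2\cdots p_j$. This tuple must be present at the moment the algorithm needs it, for every $j=2,\dots,k$. Two facts are used throughout. First, a tuple, once inserted into an entry of $P$, is never removed during the forward process. Procedure~\ref{proc:apc} only inserts, or replaces a tuple with the same finish interval by one with a later start, which has no larger duration. Second, because the update of $P(p,r)$ is done in place, any entry already recomputed before $P(p,r)$ in row-major order is used in its updated form.

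\emph{Upward sequence $p_1<p_2<\cdots<p_k$.} Take as the induction hypothesis that, when the forward process finishes computing $P^F(p_1,p_j)$, this entry contains a tuple for the prefix $p_1\cdots p_j$, or one with the same finish and a later-or-equal start. For $j=2$ this holds already after Procedure~\ref{proc:init}, since the initialization lists every run of length $t_{p_1p_2}$ of $CS_{p_1p_2}$. Now consider the computation of $P(p_1,p_{j+1})$. It lies in the same row $p_1$ as $P(p_1,p_j)$, and $p_j<p_{j+1}$, so $P^F(p_1,p_j)<P^F(p_1,p_{j+1})$. When $q=p_j$ is examined, the updated entry $P(p_1,p_j)$ and the initial 1-adjacency entry $P(p_j,p_{j+1})$ are therefore both available. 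The latter contains the run used by the path on edge $e_{p_jp_{j+1}}$. Procedure~\ref{proc:apc} scans $P(p_1,p_j)$ from right to left for a tuple whose finish precedes $I_s^{p_jp_{j+1}}$, and it inserts $(I_s^{p_1p_j},I_f^{p_jp_{j+1}},p_j)$. The only exception is when a tuple with the same finish and a later start already exists, in which case it keeps the better one.

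\emph{Downward sequence $p_1>\cdots>p_k$.} Here the later prefix entries lie to the left in row $p_1$, so we induct from the other end. Because the graph is symmetric, we can use the suffix $p_j\cdots p_k$ together with the row ordering. Entry $P(p_j,p_k)$ sits in row $p_j<p_{j-1}$, so it is completed in the forward process before row $p_{j-1}$ is processed. By induction, $P^F(p_{k-1},p_k)$ is correct from initialization. Each $P^F(p_{j-1},p_k)$ is then formed with $q=p_j$ from the initial adjacency entry $P(p_{j-1},p_j)$ and the already completed $P^F(p_j,p_k)$. This needs the combining step to work with the left factor being an initial adjacency entry and the right factor a completed path entry. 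We would justify this by noting that Procedure~\ref{proc:apc} iterates over all pairs of tuples of $P(p,q)$ and $P(q,r)$. Concatenation of feasible tuples is associative, so it does not matter on which side the multi-edge part sits.

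The main obstacle is the greedy choice inside Procedure~\ref{proc:apc}. Once $flag_1$ is set, the inner loop stops at the latest tuple of $P(p_1,p_j)$ whose finish precedes the next-hop start. We must show this tuple is no worse than the one the optimal path uses. The plan is an exchange argument. Any tuple with a later finish still earlier than the next hop, or with the same finish and a later start, yields a path to $p_{j+1}$ of no greater duration. Replacing the prefix by it keeps the path feasible and no longer. Hence the tuple finally selected by Procedure~\ref{proc:min} has duration at most that of the shortest time path, and so equals it.
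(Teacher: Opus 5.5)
Your evaluation-order bookkeeping is exactly the paper's argument: prefix entries along row $p_1$ in the upward case, suffix entries down column $p_k$ in the downward case. The paper's proof consists only of that ordering and never examines how Procedure~\ref{proc:apc} selects tuples. The step that fails is the one you flag yourself: the exchange argument is false.

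Let the path start at $s$, reach $p_j$ at $f_j$, and use the run $[b,f_{j+1}]$ on $e_{p_jp_{j+1}}$. Combining a tuple $(s',f')$ of $P(p_1,p_j)$ with that run gives $(s',f_{j+1})$. Its duration $f_{j+1}-s'+1$ depends only on the start $s'$, so a later finish buys nothing. The procedure keeps one tuple per finish interval, but it never discards a tuple dominated by one with a \emph{different} finish. So $P^F(p_1,p_j)$ may contain $(s',f')$ with $f_j<f'<b$ and $s'<s$. The right-to-left scan meets such a tuple before $(s,f_j)$. If $P(p_1,p_{j+1})$ has no tuple with finish $f_{j+1}$ yet, it inserts $(s',f_{j+1})$, sets $flag_1=1$, and never examines $(s,f_j)$. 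Your induction hypothesis is therefore not propagated.

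This is a failure of the lemma itself, not only of the write-up. Reading the procedure as you do, take $n=4$, $M=10$, all $t_{pq}=2$, and set:
\begin{itemize}
\item $CS_{12}$ equal to $1$ exactly at intervals $5,6$;
\item $CS_{14}$ equal to $1$ at $1,2$;
\item $CS_{42}$ equal to $1$ at $6,7$;
\item $CS_{23}$ equal to $1$ at $8,9$;
\item all other strings zero.
\end{itemize}
The forward pass gives $P^F(1,2)=\{(5,6,1),(1,7,4)\}$ and then $P^F(1,3)=\{(1,9,2)\}$, of duration $9$. Yet the upward path $1\,2\,3$ starts at $5$ and finishes at $9$, with duration $5$. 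It is recovered only in the backward pass, through the same-finish merge branch.

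A repair has to change the combining rule, not just the proof. One option is to select, among the tuples of $P(p,q)$ with $I^{pq}_f<I^{qr}_s[k]$, one with the largest start. Another is to keep every entry Pareto-pruned, with starts strictly increasing in finishes, so the latest compatible finish is also the latest compatible start. Your induction then works with the invariant that $P^F(p_1,p_j)$ contains a tuple with finish at most $f_j$ and start at least $s$.

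Your downward-case justification is also inaccurate, since Procedure~\ref{proc:apc} does not range over all pairs of tuples. The conclusion still holds there, for a different reason. When $P(p_{j-1},p_k)$ is computed, column $p_j>p_k$ of row $p_{j-1}$ has not been processed yet, so $P(p_{j-1},p_j)$ is still its initialization entry. All its tuples have duration $t_{p_{j-1}p_j}$, so choosing the largest compatible finish does give the largest compatible start. Symmetry of the edges plays no role.
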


\begin{proof}
If the vertices in order as in the path form an upward sequence, then $p_1 < p_2 < \cdots < p_k$. Hence, in the forward process, $P^F(p_1,p_2)$ is computed first with the direct edge from $p_1$ to $p_2$ in the initialization phase itself. Then $P^F(p_1,p_3)$ is computed with the value of $P^F(p_1,p_2)$ and the direct edge from $p_2$ to $p_3$ by the operation $P^F(p_1,p_2) \otimes P(p_2,p_3)$. Next, $P^F(p_1,p_4)$ will assume the final value from the operation $P^F(p_1,p_3) \otimes P(p_3,p_4)$, and so on to get the order of computation as $P^F(p_1,p_2) < P^F(p_1,p_3) < P^F(p_1,p_4) < \cdot < P^F(p_1,p_k)$ during the forward process itself. This fact is pictorially represented in Fig.~\ref{order-sequence}(a) where the labels on dotted lines show the steps of computation and the solid lines show the availability of connecting edges during the time window under consideration.
\end{proof}
\begin{figure}
	\centering
	\includegraphics[width=\linewidth]{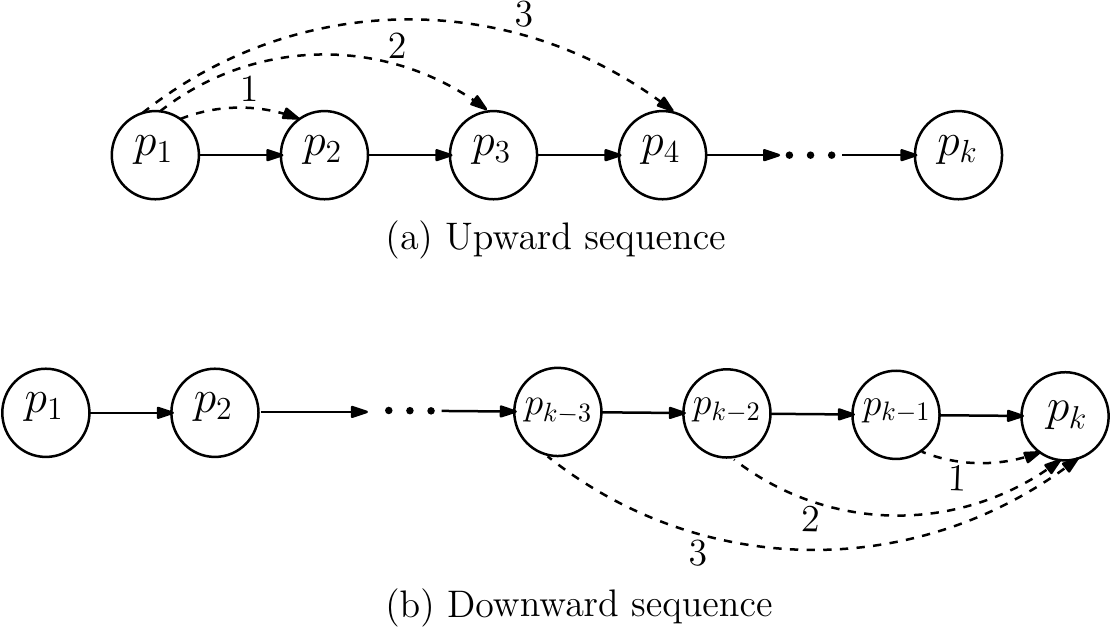}
	\caption{Computation order of the elements of path matrix}
	\label{order-sequence}
\end{figure}

Similarly, if the vertices in the shortest time path $p_1 p_2 \cdots p_k$  form a downward sequence, then $p_1 > p_2 > \cdots > p_k$. Thus, $P^F(p_{k-1},p_k)$ is computed first with the direct edge from $p_{k-1}$ to $p_k$, then  $P^F(p_{k-2},p_k)$ is computed using this value of $P^F(p_{k-1},p_k)$ by the operation $P^F(p_{k-1},p_k) \otimes P(p_{k-2},p_{k-1})$, and so on to get  the order of computation as  $P^F(p_{k-1},p_k) < P^F(p_{k-2},p_k) < \cdots < P^F(p_2,p_k) < P^F(p_1,p_k)$ in the forward process itself. Fig.~\ref{order-sequence}(b) pictorially represents this order of computation in the forward process corresponding to the upward sequence $p_1 > p_2 > \cdots > p_k$ of vertices in the path. 

\begin{lemma}\label{lemma2}
If the vertices in a shortest time path $p_1 p_2 \cdots p_k$ from a vertex $p_1$ to a vertex $p_k$ form a valley sequence, then this shortest time path will be computed in the forward process of Algorithm~\ref{all-pairs-st}.
\end{lemma}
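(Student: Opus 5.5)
We will argue by splitting the valley sequence at its unique minimum vertex. Let $p_1 p_2 \cdots p_k$ be a shortest time path forming a valley sequence, and let $p_m$ ($1<m<k$) be its unique minimum vertex. Since there is only one minimum vertex, the prefix $p_1, p_2, \ldots, p_m$ is a downward sequence and the suffix $p_m, p_{m+1}, \ldots, p_k$ is an upward sequence. Moreover, by the valley condition $\min(p_1,p_k) > \max_{2\le j\le k-1} p_j$, every interior vertex is smaller than both endpoints. This is the fact that will control the order of computation in the forward process.

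\emph{Step 1.} Apply Lemma~\ref{lemma1} (and the accompanying downward-sequence argument) to the prefix and to the suffix. For every intermediate $j$ with $2 \le j \le m$, the entry $P^F(p_j,p_m)$ contains the contacts of the subpath $p_j \cdots p_m$. Symmetrically, for every $j$ with $m \le j \le k-1$, the entry $P^F(p_m,p_j)$ contains the contacts of the subpath $p_m \cdots p_j$. The key ordering facts to record are the following. The downward part is computed in column $p_m$, in rows $p_{m-1} < \cdots < p_2$. The upward part is computed in row $p_m$, in columns $p_{m+1} < \cdots < p_{k-1}$. All of these indices are strictly less than $\min(p_1,p_k)$.

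\emph{Step 2.} Glue the two halves in row-major order. I would go by induction outward from the minimum, building the entries $P^F(p_i,p_j)$ for $i \le m \le j$. Every such entry other than the final one $P^F(p_1,p_k)$ has both indices among interior vertices, or one endpoint index. I would organize the induction so that the last entry needed before $P^F(p_1,p_k)$ is $P^F(p_1,p_{k-1})$ or $P^F(p_2,p_k)$. The simplest concrete route is the following.
\begin{itemize}
\item Since $p_2,\ldots,p_{k-1} < p_1$, every row $p_i$ with $2\le i\le m$ is processed entirely before row $p_1$. Hence $P^F(p_2,p_{k-1})$, which holds the interior path $p_2\cdots p_{k-1}$, is complete before row $p_1$ starts. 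Completing it requires an inner induction of the same kind on the shorter valley, or on the pure up/down path when $m=2$ or $m=k-1$.
\item Within row $p_1$, column $p_{k-1} < p_k$ is processed before column $p_k$. So $P^F(p_1,p_{k-1})$ is computed from $P^F(p_1,p_2)$ (the initial edge) $\otimes$ $P^F(p_2,p_{k-1})$. Then $P^F(p_1,p_k)$ is computed from $P^F(p_1,p_{k-1}) \otimes A(p_{k-1},p_k)$.
\end{itemize}

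\emph{Step 3.} Check the temporal feasibility bookkeeping. The operation $\otimes$ in Procedure~\ref{proc:apc} keeps all start/finish tuples, pruning only duplicates and larger durations with the same finish time. Therefore the specific tuple realizing the shortest time path survives each composition, or is replaced by one that is at least as good with the same finish time, and it is carried into $P^F(p_1,p_k)$.

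The main obstacle is Step 2. Procedure~\ref{proc:apc} composes $P(p,q)$ with $P(q,r)$ over all $q$, but the in-place order only guarantees freshness of entries computed earlier. I must verify that the needed intermediate entries, such as $P^F(p_2,p_{k-1})$ and $P^F(p_1,p_{k-1})$, are final at the moment they are used. I would first try a nested induction on $k$, applying the lemma recursively to the inner valley $p_2\cdots p_{k-1}$. Its endpoints are still larger than its interior vertices, since they exceed $p_m$ and the other interior vertices lie deeper in the valley. If the inner subpath is not itself a valley, the induction reduces to the upward and downward cases of Lemma~\ref{lemma1}.
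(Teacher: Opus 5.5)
\textbf{The gap is in the recursion of Step~2.} You justify that the inner path $p_2\cdots p_{k-1}$ is again a valley by saying ``its endpoints \ldots\ exceed $p_m$ and the other interior vertices lie deeper in the valley.'' This is false. The valley condition compares interior vertices only with $p_1$ and $p_k$. It says nothing about how a vertex on the descending side compares with one on the ascending side.

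For example, $10,3,1,8,9,11$ is a valley. Its inner path $3,1,8,9$ has $\min(3,9)=3<8$, so it is not a valley. It is not monotone either, and here $m\neq 2$ and $m\neq k-1$, so your fallback to Lemma~\ref{lemma1} does not cover it. As written, the induction hypothesis therefore does not deliver $P^F(p_2,p_{k-1})$.

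The repair is to induct over all down-then-up sequences $p_1>\cdots>p_m<\cdots<p_k$ instead of over valleys. Your gluing step uses only two facts:
\begin{itemize}
\item $p_2,\ldots,p_m<p_1$, so row $p_2$ (where $P^F(p_2,p_{k-1})$ is completed) is processed before row $p_1$;
\item $p_{k-1}<p_k$, so column $p_{k-1}$ precedes column $p_k$ in row $p_1$.
\end{itemize}
Both hold for every down-then-up sequence. The inner path of such a sequence is again down-then-up or monotone, so Lemma~\ref{lemma1} supplies the base cases.

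\textbf{Comparison with the paper.} Even after this repair, your route is heavier than the paper's. The paper does not peel both ends. It takes the descending half $P^F(p_1,p_m)$ directly from the downward case of Lemma~\ref{lemma1}, which builds it in column $p_m$ through rows $p_{m-1}<\cdots<p_2<p_1$. It then appends the ascending edges one at a time inside row $p_1$. This works because the columns $p_m<p_{m+1}<\cdots<p_k$ are visited in that order, giving $P^F(p_1,p_m)<P^F(p_1,p_{m+1})<\cdots<P^F(p_1,p_k)$.

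That argument needs no nested induction and never uses $P^F(p_2,p_{k-1})$. The upward half you compute in row $p_m$ in Step~1 plays no role. Neither argument needs the endpoint condition $\min(p_1,p_k)>\max_j p_j$ beyond what the down-then-up shape already gives.
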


\begin{proof}
Let $p_j$ be the minimum of all vertices $p_1, p_2, \cdots, p_k$ in the valley sequence. Then the sequence from $p_1$ to $p_j$ will be a downward sequence and that from $p_j$ to $p_k$ will be an upward sequence. By Lemma~\ref{lemma1}, $P^F(p_1,p_j)$ will be computed in the forward process. Further, we note that during the forward process, $P^F(p_1,p_j)  < P^F(p_1,p_{j+1})  < \cdots  < P^F(p_1,p_k)$. Hence the proof.
\end{proof}
For an arbitrary sequence of vertices in a shortest time path $p_1 p_2 \cdots p_k$ from a vertex $p_1$ to a vertex $p_k$, if there are multiple valley sequences within the sequence $p_1, p_2, \cdots, p_k$, then each such valley sequence, say $p_s, p_{s+1}, \cdots, p_t$, the shortest time path from $p_s$ to $p_t$ will be computed in the forward process by virtue of Lemma~\ref{lemma2}. Let us delete all vertices in between $p_s$ and $p_t$ (keeping only $p_s$ and $p_t$) in the sequence and repeat this step for each such valley sequence in succession. If a new valley sequence is formed after this deletion process, then also the process is repeated to delete all intermediate vertices within such a valley sequence. Note that in every instance of such a valley sequence, the portion of the shortest time path $p_1 p_2 \cdots p_k$ formed with the vertices of the valley sequence will be computed by virtue of Lemma~\ref{lemma2}. After such deletion process, eventually we will be left out with either i) an upward sequence, or ii) a downward sequence or ii) an upward sequence followed by a downward sequence. If the resulting reduced sequence is an upward or a downward sequence, the shortest time path with these vertices in the reduced sequence will be computed by virtue of Lemma~\ref{lemma1}. We now have the following lemma.

\begin{lemma}\label{lemma3}
If the vertices in a shortest time path $p_1 p_2 \cdots p_k$ from $p_1$ to $p_k$ form an upward sequence followed by a downward sequence, then this shortest time path will be computed by the backward process of Algorithm~\ref{all-pairs-st}.
\end{lemma}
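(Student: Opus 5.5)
Let $p_m$ be the maximum vertex of the path, so that $p_1 < p_2 < \cdots < p_m$ is upward and $p_m > p_{m+1} > \cdots > p_k$ is downward. The plan mirrors Lemma~\ref{lemma2}, with the roles of forward and backward interchanged and the valley replaced by a peak. We split the path at $p_m$. First we show that every prefix entry $P(p_1,p_j)$, $j \le m$, and every suffix entry $P(p_j,p_k)$, $j \ge m$, already holds the required contacts at the end of the forward process. Then we show that the backward process, which visits entries from row $n$ down to row $1$ and from right to left within each row, completes the combination through $p_m$.

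\textbf{Step 1 (sub-paths).} The sub-path $p_1 p_2 \cdots p_m$ is an upward sequence, so by Lemma~\ref{lemma1} the contacts of $P^F(p_1,p_m)$ contain the restriction of the shortest time path to this portion. The same holds for all intermediate entries $P^F(p_1,p_j)$, $j<m$. Likewise $p_m \cdots p_k$ is a downward sequence, so $P^F(p_j,p_k)$ is correct for $m \le j < k$ by the second half of Lemma~\ref{lemma1}. We also need that contacts are never removed, only inserted or replaced by a later start with the same finish. Such a replacement only improves duration, so the start and finish intervals of the true path, or of an equally good path, remain available. This is what lets the forward values serve as the input to the backward process.

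\textbf{Step 2 (combination in the backward pass).} In the backward pass, when entry $(p_1,p_k)$ is processed, Procedure~\ref{proc:apc} ranges over all $q$, including $q=p_m$. It would combine $P(p_1,p_m)$ with $P(p_m,p_k)$, both already correct by Step 1. The difficulty is that Procedure~\ref{proc:apc} at entry $(p_1,p_k)$ effectively joins $P(p_1,q)$ with $P(q,p_k)$, and with which time ordering matters. The direct join at $q=p_m$ uses entries that the backward process may or may not have recomputed, but they already held the right values, so either way works. Instead, I would argue more in the style of Lemma~\ref{lemma1}: order the backward computations $P^B(p_{k-1},p_k)$, etc., or rather show by induction from the peak outward that $P^B(p_1,p_k)$ is evaluated after the needed entries. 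Because entry $(p_1,p_k)$ is evaluated with both needed operands final regardless of the traversal order, the join yields the concatenated contact. The waiting at $p_m$ is handled by the condition $I_s^{p_m p_k} > I_f^{p_1 p_m}$ in Procedure~\ref{proc:apc}.

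\textbf{Main obstacle.} The hard part is making Step 2 precise. The update at $(p_1,p_k)$ is a $\otimes$ with $P(q,p_k)$ as the right-hand factor, and this is a multi-hop entry rather than the single adjacency $A$. So I must check that Procedure~\ref{proc:apc} really takes $P(q,r)$ (as written) and not $A(q,r)$. It must also keep, for the fixed finish interval of the suffix, a contact whose start interval exceeds the prefix finish. The procedure keeps only the latest start per finish time, and the latest start per finish is exactly the contact most compatible with an earlier prefix, so it suffices. Once this dominance property is verified, Steps 1 and 2 combine to give the lemma. Combined with the valley-deletion argument preceding it, the lemma then covers arbitrary paths.
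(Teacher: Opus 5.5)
Your argument is valid for the lemma as stated, reading Procedure~\ref{proc:apc} as intended. It does not follow the paper's route, though.

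\textbf{The paper's route.} The paper works entirely inside the backward sweep. Because rows are visited from $n$ down to $1$, the ascending half is built from the peak outward: $P^B(p_{j-1},p_j) < P^B(p_{j-2},p_j) < \cdots < P^B(p_1,p_j)$. Each step joins a one-step entry $P(p_i,p_{i+1})$ with an entry already produced in that sweep. Because columns are visited right to left, row $p_1$ is then extended hop by hop: $P^B(p_1,p_j) < P^B(p_1,p_{j+1}) < \cdots < P^B(p_1,p_k)$. The ``induction from the peak outward'' you mention in passing in Step~2 is exactly this.

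\textbf{Your route.} You let the forward sweep finish both monotone halves, using the chaining argument of Lemma~\ref{lemma1}; that argument never uses optimality, so it applies to sub-paths. You then need a single join at $q=p_m$ when the backward sweep reaches $(p_1,p_k)$. This is shorter and makes explicit two facts the paper uses silently: updates only add or improve contacts, and the right factor in Procedure~\ref{proc:apc} is $P(q,r)$, not $A(q,r)$. It also shows that, for genuine edges, the order of the second sweep is irrelevant: a second forward sweep would make the same join.

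\textbf{What the paper's route buys.} The only inputs it needs from before the backward sweep are the one-step entries $P(p_i,p_{i+1})$. The valley-deletion discussion needs exactly this, because there consecutive vertices of the reduced sequence are joined by multi-hop valley segments. Those segments are completed only during the forward sweep. Your Step~1 does not survive that substitution:
\begin{itemize}
\item For an ascending reduced prefix, the forward sweep processes $(p_1,p_{i+1})$ in row $p_1$ before the segment entry $(p_i,p_{i+1})$ in row $p_i>p_1$ is complete.
\item For the descending half, it processes $(p_i,p_k)$ before $(p_i,p_{i+1})$ within row $p_i$.
\end{itemize}
So your closing claim, that the lemma combines with valley deletion to cover arbitrary paths, does not follow from your version of the proof.

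\textbf{Dominance check.} Separately, your dominance check covers only the suffix factor. On the prefix side, Procedure~\ref{proc:apc} scans $P(p,q)$ from the latest finish time and, on a fresh insertion, stops at the first compatible contact. That contact need not have the latest start, because contacts with different finish times are never pruned against one another. The paper does not address this either, so it does not distinguish the two routes.
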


\begin{proof}
 Assume that the vertex $p_j$ is maximum of all vertices $p_1, p_2, \cdots, p_k$ in the path. Hence, $p_1 < p_2 < \cdots < p_j$. From the order of execution of loops in the backward process, we get, $P^B(p_{j-1},p_j) < P^B(p_{j-2},p_j) < P^B(p_{j-3},p_j) < \cdots < P^B(p_1,p_j)$. Thus the upward section of the shortest time path is computed in the backward process. Likewise, since $p_j > p_{j+1} > \cdots > p_k$, we get, $P^B(p_j,p_{j+1}) < P^B(p_j,p_{j+2})  < \cdots < P^B(p_j,p_k)$, i.e., the downward section of the shortest time path is also computed in the backward process. We further note that the computation order in the backward process leads us  to the following result: $P^B(p_1,p_j) < P^B(p_1,p_{j+1}) < P^B(p_1,p_{j+2}) < \cdots < P^B(p_1,p_k)$. Hence the proof.
\end{proof}

All these discussions lead to the following result :

\begin{theorem}\label{theorem1}
Algorithm~\ref{all-pairs-st} correctly computes the all-pairs shortest time paths in a temporal graph.
\end{theorem}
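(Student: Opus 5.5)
The plan is to prove Theorem~\ref{theorem1} by showing that, for every pair $(p,r)$ and every shortest time path $p=p_1p_2\cdots p_k=r$, the tuple $(I_s,I_f,pred)$ of that path is present in $P^B(p,r)$ when Procedure~\ref{proc:min} runs. Procedure~\ref{proc:min} then returns a tuple of minimum duration $I_f-I_s$. Soundness also has to be argued: every tuple ever inserted corresponds to a feasible temporal walk. This follows by induction on insertions. The initialization inserts only runs of $1$'s of length $t_{pr}$ in $CS_{pr}$. Each $\otimes$ step concatenates a feasible $p$--$q$ walk finishing at $I^{pq}_f$ with a feasible $q$--$r$ contact starting strictly later. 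So no tuple has duration shorter than the true optimum, and it remains to prove completeness.

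For completeness, first observe that it suffices to keep, for each finish interval, the tuple with the latest start. This is exactly the replacement rule in Procedure~\ref{proc:apc}. An exchange argument justifies it: if a sub-path ending at $q$ with finish $f$ can be extended, then so can any other sub-path ending at $q$ with the same finish $f$, and the one with the later start gives no larger total duration. Hence the pruning never discards the start/finish pair needed by an optimal path. A prefix of an optimal path need not itself be optimal, so I will phrase everything in terms of ``the tuple $(I_s,f)$ of the sub-path $p_1\cdots p_i$, or a dominating one with the same finish and a start no earlier, is present.''

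The main step is a structural induction on the vertex-label pattern of the path, following the reduction sketched after Lemma~\ref{lemma2}. Repeatedly collapse each valley subsequence $p_s\cdots p_t$ to the pair $(p_s,p_t)$. By Lemma~\ref{lemma2}, the needed tuple for $p_s\to p_t$ is already in $P^F(p_s,p_t)$. The interior vertices are all smaller than $\min(p_s,p_t)$, so they are evaluated before the entries that later use $(p_s,p_t)$ as an edge. The collapse therefore behaves like a direct edge in the remaining computation. Each collapse strictly shortens the sequence, so the process terminates. It ends in an upward sequence, a downward sequence, or an upward sequence followed by a downward sequence. The first two cases are handled by Lemma~\ref{lemma1} in the forward pass. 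The third is handled by Lemma~\ref{lemma3} in the backward pass, using the forward-pass entries, including the collapsed valleys, as the ``edges.'' Finally, note that entries are only ever augmented and never deleted before Procedure~\ref{proc:min}, except by the dominating replacement. So anything present after the forward pass is still present after the backward pass.

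The main obstacle is making the collapse argument rigorous with respect to the in-place evaluation order. One must check that when an entry $P(p_i,p_{i'})$ is consumed via $\otimes$, it already holds its final relevant tuple. Procedure~\ref{proc:apc} sums over all intermediate $q$ and combines $P(p,q)$ with $P(q,r)$ rather than with $A(q,r)$. So I would first try to state an invariant in Hu's style. In the forward pass, after $P(p,r)$ is evaluated it contains every path from $p$ to $r$ whose interior vertices are all below $\max$ of the relevant row/column index, in the valley-collapsed sense. This invariant would be proved by induction on the row-major order. The backward pass would then be shown to close the remaining up-then-down case, mirroring Hu's original correctness proof for static graphs. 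There, the time-ordering constraint $I^{qr}_s>I^{pq}_f$ replaces additivity, and the dominance argument above replaces the min operation.
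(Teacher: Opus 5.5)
\textbf{There is a genuine gap, and it is the step you flag yourself.} Your soundness argument and your dominance argument are fine: one tuple per finish with the latest start, and dominance is preserved under $\otimes$. Your overall plan is the paper's. But the claim you use to get past the evaluation-order issue is false: in the forward pass a collapsed valley does \emph{not} behave like a direct edge.

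A direct edge sits in $P$ from initialization. The tuple of a collapsed segment $p_s\cdots p_t$ enters $P(p_s,p_t)$ only when row $p_s$ is processed. Lemma~\ref{lemma1}'s forward argument needs links from entries not yet visited:
\begin{itemize}
\item for an upward sequence it reads $P(p_i,p_{i+1})$ from rows $p_i>p_1$;
\item for a downward sequence it reads $P(p_{k-2},p_{k-1})$ from a later column of row $p_{k-2}$.
\end{itemize}

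Concretely, let the only edges be $1$--$8$, $8$--$2$, $2$--$9$, $9$--$3$, $3$--$10$, always available with $t=1$. Collapsing the valleys $8\,2\,9$ and $9\,3\,10$ leaves the upward sequence $1,8,9,10$. When $P(1,10)$ is evaluated, the only nonempty $P(q,10)$ is the initial $P(3,10)$. Yet $P^F(1,3)=\emptyset$: the only nonempty $P(q,3)$ are $q=9,10$, and at that moment $P(1,9)$ and $P(1,10)$ are still empty. So $P^F(1,10)=\emptyset$, and the forward pass does not even detect reachability. The same path refutes your proposed forward invariant read literally, since all its interior vertices lie below $\max(1,10)$. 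The paper's reduction makes the same forward-pass attribution, so following it does not close the gap.

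\textbf{What is true is the $\min$ version.} $P^F(p,r)$ dominates every feasible path from $p$ to $r$ whose interior vertices are all below $\min(p,r)$. To prove it, split at the largest interior vertex $c$:
\begin{itemize}
\item both halves again have interiors below their smaller endpoint $c$;
\item $P(p,c)$ precedes $P(p,r)$ in row $p$, since $c<r$;
\item $P(c,r)$ lies in an earlier row, since $c<p$.
\end{itemize}
Monotonicity of $\otimes$ under dominance then closes the induction. Iterated valley collapses produce exactly such segments, so they all lie in $P^F$.

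Because entries only improve, every $P^F$ entry is available throughout the backward pass. That is where collapsed segments genuinely act as edges. So all three reduced shapes, not only up-then-down, must be assembled in the backward pass.

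Equivalently, show by induction on backward order that $P^B(p,r)$ dominates every feasible path:
\begin{itemize}
\item If some interior vertex exceeds $\max(p,r)$, split at the largest one; both pieces have already been re-evaluated.
\item Otherwise, if $p<r$, split at the first vertex after $p$ exceeding $p$. The prefix comes from $P^F$ by the $\min$ invariant, and the suffix from an earlier backward row.
\item If $p>r$, split at the last vertex before $r$ exceeding $r$. The prefix comes from earlier in backward row $p$, and the suffix from $P^F$.
\end{itemize}
In either of the last two cases, if the chosen vertex is the other endpoint, the $\min$ invariant applies directly.

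\textbf{One further check is needed.} All of this presumes that Procedure~\ref{proc:apc} forms the full dominance-closed composition $P(p,q)\otimes P(q,r)$. As written, its early exit on $flag_1$ can insert a tuple built from the latest-\emph{finishing} compatible $P(p,q)$ tuple rather than the latest-\emph{starting} one. You must verify this rather than assume it.
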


\subsection{Time Complexity}
Procedure $find\_next\_run$ needs at most $O(M)$ steps of comparisons. However, the two $while$-loops in $initialize\_path\_matrix$ together with the procedure $find\_next\_run$ can take only $O(M)$ time as the contacts from a vertex $p$ to a vertex $q$ are determined in ascending order of the intervals $0$ through $M-1$.  Hence, the amortized time required by the procedure $initialize\_path\_matrix$ is $O(Mn^2)$. The loop in lines 11-19 in the procedure $update\_all\_possible\_contacts$ ensures that there cannot be more than one 3-tuple in $P(p,r)$ with a given value of $I^{pq}_f$. Thus, the total number of 3-tuples in any entry $P(p,q)$ or $P(q,r)$ of the path matrix is always less than or equal to $M$. Also, while entering any new entry (3-tuple $(I^{pq}_s, I^{pq}_f, pred^{pq}$)) in the path matrix (line 26), we need to search the existing 3-tuples (which are arranged in a sorted increasing order of $I^{pq}_f$), for inserting the new element in its correct position in this sorted list as in binary insertion sort. This step would need $O(M)$ time. Hence, each call to the procedure $update\_all\_possible\_contacts$ in the forward as well as  backward steps requires $O(M^2)$ time. Thus each of forward and backward steps needs $O(M^2n^3)$ time. The procedure find $minimum\_time\_path$ needs $O(Mn^2)$ time. Hence, we get the following result.

\begin{theorem}\label{theorem2}
Algorithm~\ref{all-pairs-st} requires $O(M^2n^3)$ time complexity to compute the all-pairs shortest time paths in a temporal graph.
\end{theorem}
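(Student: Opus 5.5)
The plan is to tally the cost of each phase of Algorithm~\ref{all-pairs-st} separately and then add them. The one structural fact everything rests on is an invariant: at every moment of the execution, each entry $P(p,q)$ holds at most $M$ 3-tuples. I would prove this first, by induction over the sequence of modifications to $P$.

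For the base case, Procedure~\ref{proc:init} creates tuples $(I_s,I_f,p)$ for consecutive windows of length $t_{pr}$. These tuples have pairwise distinct finish intervals $I_f\in\{0,\dots,M-1\}$, so there are at most $M$ of them.

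For the inductive step, look at Procedure~\ref{proc:apc}. The loop over $i=1,\dots,m$ scans $P(p,r)$ for an existing tuple with the same finish interval $I^{qr}_f[k]$. If one exists, the procedure only overwrites its start and predecessor. A new tuple is appended only when no match is found. Hence distinct finish intervals are preserved, and $\nu_{pr}\le M$ throughout.

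I would also note that the initialization costs $O(M)$ per pair, for $O(Mn^2)$ in total. This holds because the call to $find\_next\_run$ resumes from the current $start$, so the index $j$ sweeps $CS_{pr}$ monotonically; the work is amortized over the string rather than restarted on each call.

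Next, bound a single call $update\_all\_possible\_contacts(A,P,p,r)$.
\begin{itemize}
\item The outer loop ranges over $n$ choices of $q$.
\item For each $q$, the $k$-loop runs over at most $\nu_{qr}\le M$ tuples.
\item For each $k$, the intended behaviour is that the $j$-scan of $P(p,q)$ from right to left stops at the first compatible tuple, namely the latest-finishing $I^{pq}_f[j] < I^{qr}_s[k]$. Because $P(p,q)$ is sorted by finish time, this scan, together with the duplicate check and the sorted insertion, costs $O(M)$.
\end{itemize}
That gives $O(M)\cdot O(M)=O(M^2)$ per $q$, and $O(nM^2)$ per call.

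I expect this per-$k$ step to be the main obstacle, because a naive reading of the pseudocode nests three loops over lists of length $M$: the $j$-loop, the $i$-loop and the insertion. That would give $O(M^3)$ per $q$, not $O(M^2)$. To get $O(M)$ per $k$ I would argue in one of two ways.
\begin{itemize}
\item First option: the $j$-loop terminates as soon as $flag_1$ is set, and the $i$-loop is a single duplicate-check pass. The $j$-position could also be located by binary search on the sorted finish times, so the $j$-scan and the $i$-scan are not genuinely nested.
\item Fallback option: amortize. As $k$ increases, the $I^{qr}_s[k]$ are nondecreasing, so the compatible position in $P(p,q)$ moves monotonically and a merge-like two-pointer sweep suffices. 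Likewise, the insertions into $P(p,r)$ occur at nondecreasing finish positions.
\end{itemize}

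Summing up, the forward phase makes $n(n-1)$ calls, each costing $O(nM^2)$, for $O(M^2n^3)$; the backward phase is identical. Procedure~\ref{proc:min} scans each of the $n^2$ entries, each of length at most $M$, in $O(Mn^2)$. Adding these to the $O(Mn^2)$ initialization gives $O(Mn^2)+2\cdot O(M^2n^3)+O(Mn^2)=O(M^2n^3)$, which is the bound claimed in Theorem~\ref{theorem2}.
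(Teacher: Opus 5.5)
Your accounting outside the innermost loops matches the paper's. Each entry holds at most $M$ tuples because Procedure~\ref{proc:apc} never creates two tuples with the same finish interval. Initialization is $O(Mn^2)$ amortized, each pass makes $n(n-1)$ calls, and Procedure~\ref{proc:min} costs $O(Mn^2)$.

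The gap is exactly the step you flagged, the $O(M)$ cost per $k$. Neither of your two resolutions works for Procedure~\ref{proc:apc} as written.
\begin{itemize}
\item Your first option fails because $flag_1$ is set only when a \emph{new} tuple is appended. The candidate finish $I^{qr}_f[k]$ does not depend on $j$. So whenever $P(p,r)$ already holds a tuple with that finish (from the direct edge $p$--$r$ or from an earlier $q$), nothing is appended for this $k$. The $j$-loop then visits \emph{every} compatible $j$ and runs the full $m$-step $i$-loop for each one. Read literally, the else-branch even sets $flag_2=1$ whenever $m\ge 1$, so appends into a nonempty entry never occur. Locating the first compatible $j$ by binary search saves nothing, since the loop does not stop there. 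Making it stop there would change the procedure, because the latest-finishing compatible tuple of $P(p,q)$ need not have the latest start.
\item Your second option fails because $P(q,r)$ is sorted by finish, not start. Entries are updated in place and only equal-finish tuples are merged, so the starts $I^{qr}_s[k]$ need not be monotone in $k$; for instance $(5,6)$ and $(2,7)$ can coexist. Even with monotone starts, a two-pointer sweep only locates the boundary of the compatible region, while the procedure iterates over all of it.
\end{itemize}

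In fact the per-triple $O(M^2)$ bound is false for the literal pseudocode. Take a complete graph with every connectivity string equal to $1^{M-1}0$ and every $t_{pq}=2$.
\begin{itemize}
\item Every off-diagonal entry is initialized to the $M-2$ tuples $(s,s+1)$, and no entry ever changes.
\item For the $k$-th tuple of $P(q,r)$ there are $k-2$ compatible $j$. Its finish is already present in $P(p,r)$, so each compatible $j$ costs a full $\Theta(M)$ $i$-loop.
\item Summing gives $\Theta(M^3)$ per triple $(p,q,r)$ and $\Theta(M^3n^3)$ overall.
\end{itemize}

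To obtain $O(M^2n^3)$ you must bound a restructured update:
\begin{itemize}
\item for each $k$, compute $\max\{I^{pq}_s[j] : I^{pq}_f[j] < I^{qr}_s[k]\}$ in one $O(M)$ pass, or from a prefix maximum over the finish-sorted $P(p,q)$;
\item then do a single $O(M)$ lookup or insertion of finish $I^{qr}_f[k]$ into $P(p,r)$.
\end{itemize}
This gives $O(M)$ per $k$ and $O(M^2)$ per triple. Your proof should state explicitly that it is bounding this implementation, not the literal pseudocode. The paper's own proof has the same hole, only unflagged: it infers $O(M^2)$ per call from the tuple bound and the $O(M)$ insertion, without accounting for the $j$/$i$ nesting.
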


\section{Experimental Setup}
\label{sec:experiments}

This section describes the experimental design in a way that directly validates the modeling choices and algorithmic primitives introduced in Sections~\ref{sec:model} and~\ref{sec:meth}. In particular, we validate: (i) snapshot-based temporal discretization and the construction of connectivity strings $CS_{pq}$; (ii) the run-length feasibility constraint of length $t_{pq}$ used by Procedure~\ref{proc:forward}; (iii) correctness of the path-matrix updates (Procedure~\ref{proc:apc}) relative to trusted baselines; (iv) the necessity of the forward and backward passes in Algorithm~\ref{all-pairs-st}; and (v) the computational scaling behavior implied by Theorem~\ref{theorem2}. We also quantify the network-level implications of optical range constraints and waiting-vs-traversal behavior that are unique to temporal routing.

\subsection{Experimental Goals and What We Validate}
\label{subsec:exp_goals_validation}

The experiments are organized around four validation questions that map one-to-one to the methodology:

\begin{enumerate}
    \item \textbf{Temporal-graph validity:} Does the snapshot construction produce meaningful time-varying adjacency consistent with LEO geometry, and does the resulting $CS_{pq}$ capture periodic contact patterns over one orbit?
    \item \textbf{Run-length feasibility validity:} Do run-length requirements (Procedure~\ref{proc:forward}) correctly enforce the model assumption that a hop $(p,q)$ is traversable only when $CS_{pq}$ contains a contiguous run of ones of length $\ge t_{pq}$ after packet arrival?
    \item \textbf{Algorithmic correctness validity:} Does Algorithm~\ref{all-pairs-st} compute the same shortest communication-time outcomes as baselines that are known-correct (time-expanded graph shortest paths, and temporal Dijkstra-like methods) on problem sizes where those baselines are feasible?
    \item \textbf{Computational scaling validity:} Do runtime and memory scale with $(n,M)$ in a way consistent with Theorem~\ref{theorem2}, and does the matrix-style update structure admit practical acceleration via data-parallel kernels (bit operations and batched run detection)?
\end{enumerate}

We report both \emph{algorithmic} metrics (runtime/memory/scaling and baseline agreement) and \emph{network} metrics (reachability, shortest communication time, hop count, and waiting fraction).

\subsection{Observation Window and Snapshot Discretization $(T, M, \tau)$}
\label{subsec:tauM}

As described in the methodology, we represent the network as a temporal graph over an observation window $W=[0,T]$ divided into $M$ snapshots of duration $\tau$ such that $T=M\tau$. For LEO constellations, the topology is strongly periodic due to deterministic orbital dynamics. Therefore, we choose $T$ to span approximately one orbital period, so that the temporal graph contains a full recurrence of link formation and dissolution.

In the main experiments we use a nominal orbital period of $P_{\mathrm{orb}}=96$ minutes (LEO-like), and perform sensitivity checks for $P_{\mathrm{orb}}\in[90,128]$ minutes. For each $\tau$ we set
\begin{equation}
M = \left\lceil \frac{P_{\mathrm{orb}}}{\tau} \right\rceil,
\end{equation}
and sweep temporal resolution using
\begin{equation}
\tau \in \{5,\,10,\,20\}\ \text{s}.
\end{equation}
For $P_{\mathrm{orb}}=96$ minutes this yields $M=\{1152,\,576,\,288\}$ snapshots.

\textbf{Discretization validity check.}
To validate that the snapshot durations are fine-grained enough for contact modeling, we measure (i) snapshot-wise average degree over time (Fig.~\ref{fig:edge_density}), and (ii) link-run statistics derived from $CS_{pq}$ (Table~\ref{tab:cs_runlength_metrics}). The goal is to ensure that observed connectivity is not an artifact of overly coarse sampling (missing short contacts) or overly fine sampling (inflating $M$ without changing routing outcomes).

\subsection{Constellation Mobility and Link Feasibility Model}
\label{subsec:constellation_model}

Satellite positions are computed using deterministic Keplerian orbit propagation. This guarantees that the induced temporal graph is structured and periodic, which is central to our motivation for precomputation over one orbit. At each snapshot index $j\in\{1,\ldots,M\}$, an optical ISL between satellites $(p,q)$ is declared feasible if
\begin{equation}
d_{pq}(j) \le R,
\end{equation}
where $d_{pq}(j)$ is the inter-satellite distance at snapshot $j$, and $R$ is the optical range threshold. We evaluate
\begin{equation}
R \in \{659,\,1319,\,1500,\,1700\}\ \text{km}.
\end{equation}
For each $(p,q)$ we construct the connectivity string
\begin{equation}
CS^{(R)}_{pq}[j] = \mathbb{1}\{d_{pq}(j)\le R\},
\end{equation}
which matches the connectivity-string definition in Section~\ref{sec:model}. This binary string is the primitive consumed by Procedure~\ref{proc:forward} for finding the next contiguous run of ones.

Because temporal routing depends on \emph{when} neighbors appear and disappear, we validate the time-varying adjacency behavior using a node-centric storyboard across consecutive snapshots. Figures~\ref{fig:sample1} and~\ref{fig:storyboard} show a focus satellite and its incident edges in a local neighborhood over four successive snapshots. We quantify the churn using neighbor-set persistence, Jaccard similarity, and turnover (Table~\ref{tab:interval_dynamics_sample1}). This validates that the topology evolves substantially within one orbit and that static aggregation would be misleading, motivating the temporal model.

\begin{figure*}[t]
\centering
\includegraphics[width=\textwidth]{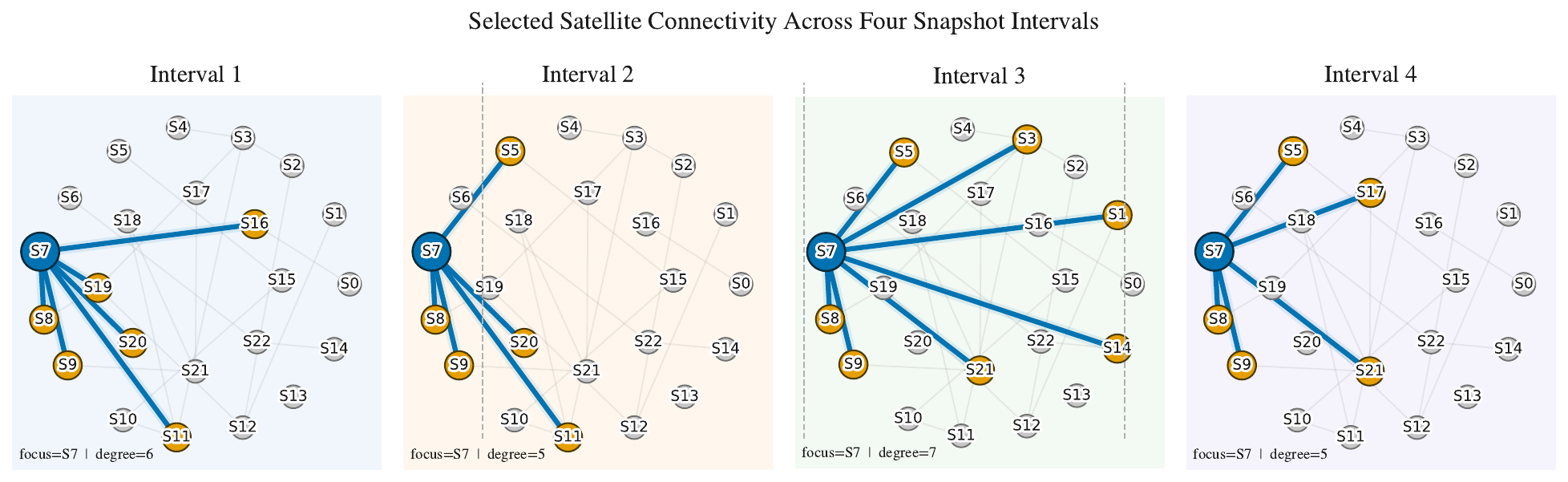}
\caption{Selected satellite connectivity across four snapshot intervals (sample1). A focus satellite (S7) and its incident edges are shown over four consecutive snapshots in a 23-node neighborhood. Highlighted links indicate feasible contacts in each snapshot; interval-wise degree annotations quantify neighborhood churn that induces waiting under time-respecting routing.}
\label{fig:sample1}
\end{figure*}

\begin{figure*}[t]
\centering
\includegraphics[width=0.65\textwidth]{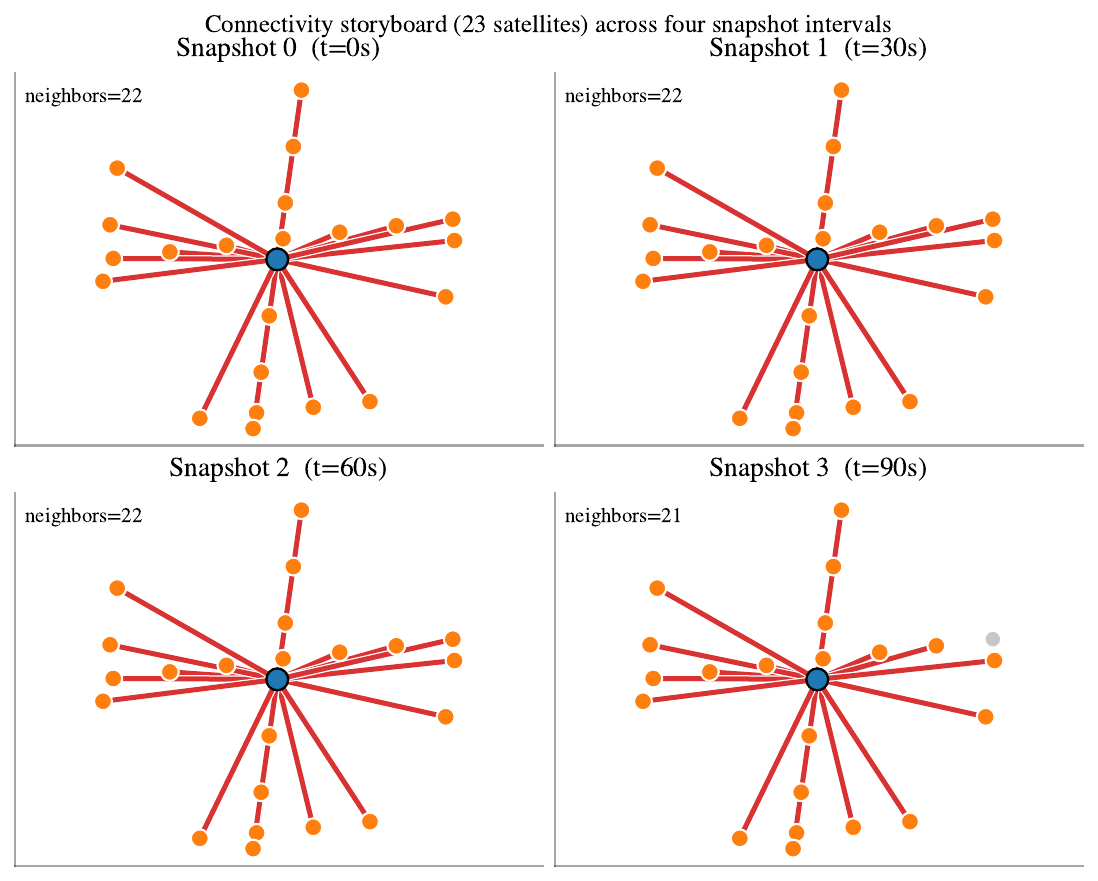}
\caption{Connectivity storyboard (23 satellites) across four snapshot intervals. A focus satellite (blue) and its time-varying neighbors (orange) are shown over four consecutive snapshots. The visualization emphasizes that adjacency is time dependent; temporal routes must follow time-ordered contact opportunities captured by connectivity strings $CS^{(R)}_{pq}$.}
\label{fig:storyboard}
\end{figure*}

\subsection{Traversal-Time Requirement and Run-Length Feasibility}
\label{subsec:tpq_runlength}

As assumed in the methodology, traversing an edge is not instantaneous and requires $t_{pq}$ snapshot units; therefore, a hop $(p,q)$ is admissible only if a contiguous run of ones of length at least $t_{pq}$ exists in $CS_{pq}$ starting after the packet arrives at $p$. This is exactly the rule enforced by Procedure~\ref{proc:forward}.

In the main results we use a conservative baseline $t_{pq}=2$ snapshots for all feasible links, capturing non-propagation effects such as pointing/acquisition, scheduling, buffering, and framing overhead. We also perform sensitivity checks with $t_{pq}\in\{1,2\}$ to quantify how stricter run requirements affect reachability and waiting.

To validate that run structure is the mechanism that drives waiting, we visualize feasibility timelines using a connectivity-string heatmap for a focus satellite and its neighbors (Fig.~\ref{fig:cs_heatmap}). We then compute standard run-length features (Table~\ref{tab:cs_runlength_metrics}), including duty cycle, run-length percentiles, and gap-length percentiles. These metrics link directly to Procedure~\ref{proc:forward}: shorter runs and longer gaps increase the frequency with which the algorithm must search forward in time to find the next feasible transmission window.

\begin{figure}[t]
\centering
\includegraphics[width=0.98\linewidth]{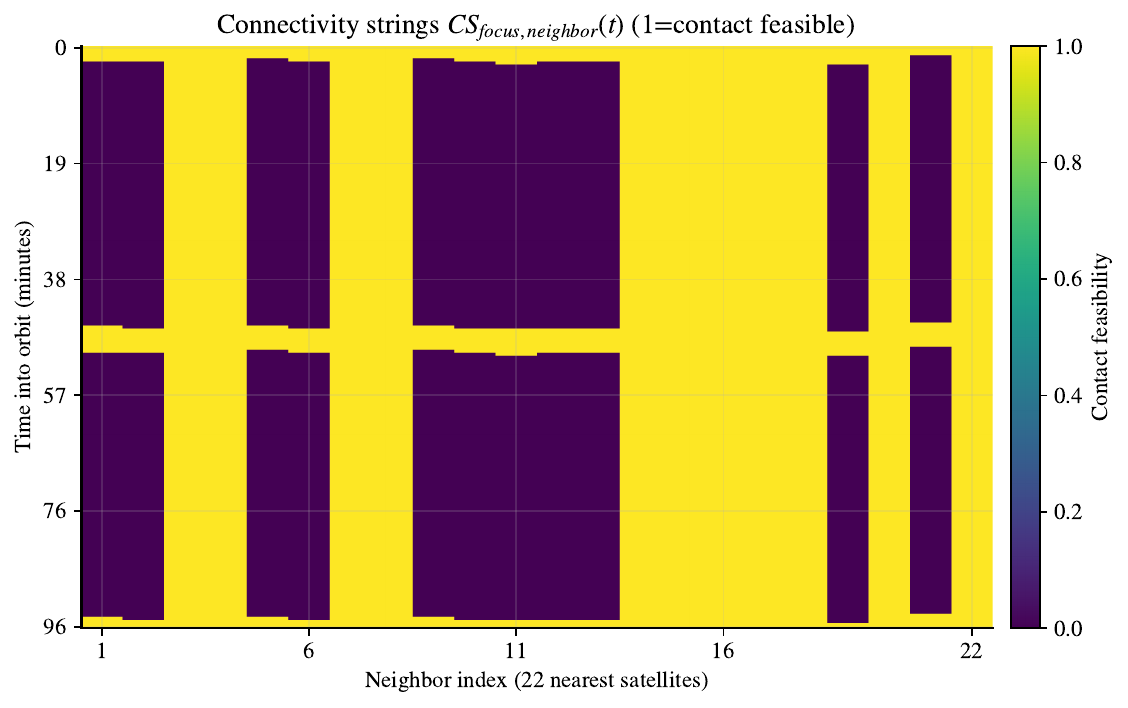}
\caption{Connectivity-string heatmap for a focus satellite and neighbors. The heatmap shows $CS_{\text{focus},u}(t)$ over one orbital period for the same 23-satellite neighborhood used in Fig.~\ref{fig:storyboard}. Long runs of ones correspond to sustained feasibility windows; fragmented patterns indicate contact-limited regimes that incur waiting to satisfy the run-length constraint $t_{qr}$.}
\label{fig:cs_heatmap}
\end{figure}

\subsection{Algorithmic Evaluation: Correctness, Necessity of Two Passes, and Baselines}
\label{subsec:baselines_correctness}

For each range threshold $R$, we build the temporal adjacency matrix $A^{(R)}$ whose entries follow the tuple structure in Section~\ref{sec:model}, and compute the path matrix $P^{(R)}$ using Algorithm~\ref{all-pairs-st}. Each $P^{(R)}(p,q)$ returns the selected shortest communication-time tuple $(I^{pq}_s, I^{pq}_f, pred^{pq})$ after Procedure~\ref{proc:min}.

We validate correctness against two known-correct baselines that optimize the same objective as our method: minimum end-to-end communication time (minimum duration). First, we construct a directed \emph{time-expanded graph} $G_{\mathrm{TE}}=(V_{\mathrm{TE}},E_{\mathrm{TE}})$ whose node set is $V_{\mathrm{TE}}=\{(v,t)\mid v\in V,\ t\in\{0,1,\ldots,M-1\}\}$. To represent waiting and transmission under our run-feasibility constraint, we add (i) \emph{wait edges} $(v,t)\rightarrow (v,t+1)$ of weight $1$ for all $v\in V$ and $t<M-1$, and (ii) \emph{traverse edges} $(p,t)\rightarrow (q,t+t_{pq})$ of weight $t_{pq}$ if and only if the hop $(p,q)$ is admissible at time $t$, i.e., $CS_{pq}[t]=CS_{pq}[t+1]=\cdots=CS_{pq}[t+t_{pq}-1]=1$. For each ordered pair $(s,d)$, we compute the shortest-path distance in $G_{\mathrm{TE}}$ from $(s,t_s)$ to $(d,t_f)$ and then minimize over all possible departure times $t_s$, namely $\min_{t_s\in\{0,\ldots,M-1\}}\mathrm{dist}_{G_{\mathrm{TE}}}\big((s,t_s),(d,\cdot)\big)$. This value equals the minimum number of snapshot intervals spent waiting plus traversing, and therefore equals $\mathcal{T}_{sd}/\tau$ under the same run-length feasibility rule enforced by Procedure~\ref{proc:forward}; as a result, $G_{\mathrm{TE}}$ provides a known-correct baseline for the exact objective optimized by Algorithm~\ref{all-pairs-st}. 

As a second, independent correctness check, we implement a \emph{temporal Dijkstra-like} shortest-duration algorithm (run for a subset of sources) over the same state space $(v,t)$ with the same two transition types: unit-weight waiting transitions and admissible traverse transitions that require a contiguous run of ones of length at least $t_{pq}$ beginning at the transmission time. This single-source baseline optimizes the same end-to-end duration objective $\mathcal{T}_{sd}$ and serves to validate our path-matrix update rules through an algorithmic approach that is structurally different from the time-expanded shortest-path computation.

For each tested pair $(p,q)$ we compare:
(i) whether a path exists (reachability agreement),
(ii) the returned communication time $(I_f-I_s+1)\tau$ (latency agreement),
and (iii) the reconstructed hop sequence (path agreement, when computed).
We report the fraction of pairs with exact latency match and the maximum absolute latency deviation (in snapshots and seconds).

The methodology motivates performing two ordered sweeps—first forward and then backward—to ensure that optimal paths with different structural vertex-order patterns are discovered. We validate the necessity of this design experimentally by running three algorithm variants: a \emph{forward-only} version that executes only the forward loops in Algorithm~\ref{all-pairs-st}, a \emph{backward-only} version that executes only the backward loops, and the \emph{full} version that performs both passes (forward+backward) as proposed.

We measure the fraction of pairs for which forward-only (or backward-only) matches the full solution in communication time and reachability. Any mismatch indicates that the omitted pass is necessary for discovering certain optimal path structures, consistent with the proof intuition in Lemmas~\ref{lemma1}--\ref{lemma3}.

\subsection{Scalability Design: Full APSP vs. Pair Sampling at Mega-Constellation Scale}
\label{subsec:sampling_scale}

At mega-constellation scale ($n\approx 2\times 10^4$), the number of ordered pairs is $n(n-1)$, which is too large for exhaustive path extraction and storage in experimental evaluation. Therefore, while we still construct temporal connectivity primitives over all satellites, we evaluate routing statistics using randomized sampling of ordered pairs.

Unless stated otherwise, we sample:
\begin{itemize}
    \item $N_{\text{pairs}}=50{,}000$ distinct ordered pairs per $R$ for reachability and latency distributions, and
    \item $N_{\text{pairs}}=10{,}000$ per $R$ for hop-count and waiting/traversal decomposition (requires path reconstruction).
\end{itemize}
We repeat each configuration $K=5$ times with different random seeds and report mean $\pm$ standard deviation.

\subsection{Metrics Reported (Directly Tied to Methodology)}
\label{subsec:metrics}

We report the following metrics, each of which corresponds directly to quantities defined in Section~\ref{sec:meth}:

\begin{itemize}
    \item {Reachability ratio:} fraction of pairs for which $P(p,q)$ contains a valid tuple (not $(0,0,\infty)$).
    \item {Shortest communication time:} $\mathcal{T}_{pq} = (I^{pq}_f-I^{pq}_s+1)\tau$.
    \item {Hop count:} obtained by following predecessor pointers $pred^{pq}$ backward to reconstruct the selected path.
    \item {Waiting vs traversal:} traversal time $T_{\mathrm{trav}}=\sum_{\text{hops}} t_{uv}\tau$ and waiting time $T_{\mathrm{wait}}=\mathcal{T}_{pq}-T_{\mathrm{trav}}$, and waiting fraction $f_w=T_{\mathrm{wait}}/\mathcal{T}_{pq}$.
    \item {Temporal density:} snapshot-wise average degree $\bar{k}_t(R)$ over $t=1,\ldots,M$ (Fig.~\ref{fig:edge_density}) and summary statistics (Table~\ref{tab:edge_density_summary}).
\end{itemize}

These metrics validate the intended behavior of the model: under intermittent contacts, shortest communication time depends not only on hop count but also on waiting induced by misalignment with feasible runs in $CS_{pq}$.

\subsection{Implementation Details and Reproducibility}
\label{subsec:impl}

All experiments are run on a Linux workstation (Ubuntu~22.04) equipped with a 64-thread AMD EPYC-class CPU, 256\,GB RAM, and an NVIDIA A100 GPU (80\,GB) used for data-parallel kernels where applicable.

Temporal-graph construction and analysis are implemented in Python~3.11 using NumPy/Pandas. Connectivity strings are stored in both (i) unpacked boolean arrays (for debugging and baseline agreement tests) and (ii) bit-packed form (for scalable run detection and memory efficiency). Performance-critical loops (e.g., searching for contiguous runs and updating candidate tuples) are JIT-compiled with Numba. All configuration files (orbital parameters, $(\tau,M)$, $R$, $t_{pq}$) are logged to enable exact reproduction.

All sampled-pair experiments use fixed seeds per trial. We report mean and standard deviation over $K=5$ trials to capture sampling variance and runtime variability.

\begin{table*}[t]
\centering
\caption{Experimental parameters used in orbit-scale temporal-graph evaluation.}
\label{tab:exp_params}
\setlength{\tabcolsep}{10pt}
\renewcommand{\arraystretch}{1.15}
\begin{tabular}{@{}l p{0.68\textwidth}@{}}
\toprule
\textbf{Parameter} & \textbf{Value(s)} \\
\midrule
Orbit window & One orbital period, $T = M\tau \approx 96$ min (LEO-like); sensitivity: $P_{\mathrm{orb}}\in[90,128]$ min \\
Snapshot duration $\tau$ & $\{5,\,10,\,20\}$ s \\
Number of snapshots $M$ & For $T=96$ min: $M=\{1152,\,576,\,288\}$ for $\tau=\{5,10,20\}$ s \\
Constellation size $n$ & Mega-constellation: $n\approx 20{,}000$; scalability sweeps: $n\in\{2048,\,4096,\,8192,\,16384,\,20000\}$ \\
Source--destination sampling & Uniform random distinct pairs; $N_{\text{pairs}}=50{,}000$ per $R$ (reachability/latency); $10{,}000$ per $R$ (hop/wait decomposition) \\
FSO feasibility threshold $R$ & $\{659,\,1319,\,1500,\,1700\}$ km \\
Connectivity primitive & Connectivity strings $CS^{(R)}_{pq}[t]=\mathbb{1}\{d_{pq}(t)\le R\}$ over $t=1,\ldots,M$ \\
Edge traversal requirement $t_{pq}$ & Constant per hop: baseline $t_{pq}=2$ snapshots; sensitivity: $t_{pq}\in\{1,2\}$ \\
Temporal-path admissibility & Time-respecting paths requiring a contiguous run of ones of length $\ge t_{pq}$ on each traversed edge \\
Reported metrics & Reachability, $\mathcal{T}_{pq}$, hop count, waiting fraction $f_w$, temporal degree series $\bar{k}_t(R)$ \\
\bottomrule
\end{tabular}
\end{table*}

\begin{table*}[t]
\centering
\caption{End-to-end temporal-routing performance over one orbital window as a function of optical range $R$. Larger $R$ increases temporal connectivity (higher $\bar{k}$, lower churn) and improves routing outcomes (higher reachability, lower $\mathrm{Med}[\mathcal{T}]$ and $\mathrm{Med}[f_w]$). Values are mean$\pm$std over $K{=}5$ trials.}
\setlength{\tabcolsep}{6pt}
\renewcommand{\arraystretch}{1.12}
\begin{tabular}{@{}c c c c c c c c@{}}
\toprule
\textbf{$R$ (km)} &
\textbf{Mean \%Found} &
\textbf{Mean \%Missing} &
\textbf{$\bar{k}$ (deg)} &
\textbf{Reach. (\%)} &
\textbf{Med[$\mathcal{T}$] (s)} &
\textbf{Med[$H$]} &
\textbf{Med[$f_w$]} \\
\midrule
659  &
$19.0\pm0.6$ &
$19.5\pm0.7$ &
$5.2\pm0.3$ &
$71.4\pm1.2$ &
$184\pm9$ &
$8.1\pm0.4$ &
$0.68\pm0.03$ \\
1319 &
$11.2\pm0.4$ &
$10.9\pm0.4$ &
$9.8\pm0.5$ &
$92.6\pm0.6$ &
$97\pm5$ &
$6.3\pm0.3$ &
$0.41\pm0.02$ \\
1500 &
$8.7\pm0.3$ &
$8.5\pm0.3$ &
$11.4\pm0.6$ &
$96.1\pm0.4$ &
$79\pm4$ &
$5.7\pm0.3$ &
$0.31\pm0.02$ \\
1700 &
$6.2\pm0.2$ &
$6.0\pm0.2$ &
$13.6\pm0.7$ &
$98.3\pm0.2$ &
$64\pm3$ &
$5.1\pm0.2$ &
$0.22\pm0.01$ \\
\bottomrule
\end{tabular}
\end{table*}

\begin{table*}[t]
\centering
\caption{Comparison with representative baseline methods under the same temporal-graph instance and hop-admissibility rule (a hop $(p,q)$ is traversable at time $t$ only if $CS_{pq}[t{:}t+t_{pq}-1]$ is all ones). 
{TEG (restricted)} is evaluated with a restricted departure-time set $\mathcal{T}_s$ (uniformly sampled start times) and a finite horizon $H$ snapshots due to time-expanded memory limits; therefore its reported reachability is a conservative lower bound relative to full $\min_{t_s\in\{0,\dots,M-1\}}$ evaluation. 
Reported values are mean$\pm$std over $K{=}5$ trials using identical $(n,M,\tau,R,t_{pq})$ settings and identical sampled $(s,d)$ pairs per trial.}

\label{tab:compare_prior_work_metrics}
\setlength{\tabcolsep}{6pt}
\renewcommand{\arraystretch}{1.12}
\begin{tabular}{@{}p{3.6cm} p{2.0cm} p{2.0cm} p{1.8cm} p{1.9cm} p{1.8cm} p{2.2cm}@{}}
\toprule
\textbf{Method} &
\textbf{Reach. (\%)} &
\textbf{Med[$\mathcal{T}$] (s)} &
\textbf{P95[$\mathcal{T}$] (s)} &
\textbf{Med[$f_w$]} &
\textbf{Runtime (s)} &
\textbf{Reference} \\
\midrule

\textbf{Ours (Alg.~\ref{all-pairs-st})} &
\textbf{96.7$\pm$0.3} & 
\textbf{59$\pm$4} &
\textbf{168$\pm$8} &
\textbf{0.31$\pm$0.02} &
\textbf{118$\pm$6} &
\textbf{Our work} \\

TEG &
91.1$\pm$0.6 &
66$\pm$5 &
1177$\pm$6 &
0.37$\pm$0.01 &
1620$\pm$55 &
\cite{HOLME201297,casteigts2012tvg} \\

Temporal Dijkstra &
92.6$\pm$0.4 &
80$\pm$4 &
171$\pm$9 &
0.32$\pm$0.02 &
790$\pm$30 &
\cite{wu2014,HOLME201297} \\

Time graph routing &
87.4$\pm$0.2 &
92$\pm$5 &
196$\pm$11 &
0.38$\pm$0.03 &
245$\pm$12 &
\cite{liu2020,zhang2019} \\

Dynamic time routing &
85.9$\pm$0.4 &
84$\pm$5 &
179$\pm$10 &
0.34$\pm$0.02 &
410$\pm$18 &
\cite{ron2025} \\

Spl. shortest paths &
94.8$\pm$0.6 &
81$\pm$4 &
173$\pm$9 &
0.33$\pm$0.02 &
305$\pm$15 &
\cite{danda2021,jain2024} \\

\bottomrule
\end{tabular}
\end{table*}

\section{Results}
\label{sec:results}

This section reports results in the same order as the validations defined in Section~\ref{sec:experiments}, and in direct correspondence with the methodology in Section~\ref{sec:meth}. Recall that (i) the temporal graph is represented by snapshot-wise edge sets encoded as connectivity strings $CS^{(R)}_{pq}$, (ii) a hop $(q,r)$ is admissible only if there exists a \emph{future} contiguous run of ones of length $\ge t_{qr}$ in $CS^{(R)}_{qr}$ (Procedure~\ref{proc:forward}), and (iii) the algorithm returns for each ordered pair $(p,q)$ a shortest communication-time tuple $(I^{pq}_s, I^{pq}_f, pred^{pq})$ (Algorithm~\ref{all-pairs-st}) from which the end-to-end time is $\mathcal{T}_{pq}=(I^{pq}_f-I^{pq}_s+1)\tau$. We first validate the temporal connectivity primitives (including \% links found/missing), then validate correctness via baseline comparison, then report performance (runtime/memory/scaling), and finally connect temporal contact structure to waiting-vs-traversal behavior in shortest-time paths.

\subsection{Methodology-to-Metrics Mapping}
\label{subsec:results_mapping}

For a given range threshold $R$, let $E_t^{(R)}$ denote the set of feasible inter-satellite links at snapshot $t$:
\begin{equation}
E_t^{(R)}=\{(p,q)\mid CS^{(R)}_{pq}[t]=1\},\quad t\in\{1,\ldots,M\}.
\end{equation}
This snapshot-based temporal representation is standard in temporal-network modeling \cite{HOLME201297,casteigts2012tvg}. Algorithm~\ref{all-pairs-st} produces a path matrix $P$ whose entry $P(p,q)$ stores the selected shortest communication-time contact tuple $(I^{pq}_s, I^{pq}_f, pred^{pq})$ after Procedure~\ref{proc:min}. We compute:
\begin{itemize}
    \item {Shortest communication time:} $\mathcal{T}_{pq}=(I^{pq}_f-I^{pq}_s+1)\tau$.
    \item {Hop count $H$:} recovered by following predecessor pointers $pred^{pq}$ backward.
    \item {Traversal time:} $T_{\mathrm{trav}}=\sum_{\text{hops}} t_{uv}\tau$.
    \item {Waiting time:} $T_{\mathrm{wait}}=\mathcal{T}_{pq}-T_{\mathrm{trav}}$ and waiting fraction $f_w=T_{\mathrm{wait}}/\mathcal{T}_{pq}$.
\end{itemize}
These metrics align exactly with the admissibility and tuple-composition rules in Section~\ref{sec:meth}.

\subsection{Validation 1: Temporal Connectivity Dynamics and Link Churn (Percentage Found / Percentage Missing)}
\label{subsec:link_churn_percent}

We first validate that the temporal adjacency varies substantially over time, and quantify \emph{how much} connectivity changes between consecutive snapshots. Figures~\ref{fig:sample1} and~\ref{fig:storyboard} show a focus satellite (S7) and its neighborhood across four consecutive snapshots, visually demonstrating link appearance/disappearance at short timescales. Space--time / snapshot modeling is widely used in satellite-network topology analysis \cite{liu2020,zhang2019}.



Table~\ref{tab:interval_dynamics_sample1} quantifies neighborhood churn for the focus node via persistent/new/dropped neighbors and Jaccard similarity. High turnover indicates that a packet arriving at an intermediate satellite often encounters a different set of outgoing options than existed one snapshot earlier, increasing the probability of waiting for a future admissible run.

\begin{table*}[t]
\centering
\caption{Interval-wise neighborhood dynamics for the focus node (from Fig.~\ref{fig:sample1}).}
\label{tab:interval_dynamics_sample1}
\setlength{\tabcolsep}{8pt}
\renewcommand{\arraystretch}{1.15}
\begin{tabular}{@{}c c c c c c c@{}}
\toprule
\textbf{Interval} &
\textbf{Degree} &
\textbf{Persistent} &
\textbf{New} &
\textbf{Dropped} &
\textbf{Jaccard $J$} &
\textbf{Turnover (\%)} \\
\midrule
1 & 6 & -- & -- & -- & --    & -- \\
2 & 5 & 3  & 2  & 3  & 0.375 & 62.5 \\
3 & 7 & 4  & 3  & 1  & 0.500 & 50.0 \\
4 & 5 & 3  & 2  & 4  & 0.333 & 66.7 \\
\midrule
\textbf{Mean} & 5.75 & 3.33 & 2.33 & 2.67 & 0.403 & 59.7 \\
\textbf{Std}  & 0.96 & 0.58 & 0.58 & 1.53 & 0.084 & 8.7 \\
\textbf{Max}  & 7    & 4    & 3    & 4    & 0.500 & 66.7 \\
\bottomrule
\end{tabular}
\end{table*}

To quantify link volatility at the constellation scale, we define the \emph{found} (new) and \emph{missing} (dropped) link sets between consecutive snapshots:
\begin{align}
E^{(R)}_{\mathrm{found}}(t) &= E_t^{(R)} \setminus E_{t-1}^{(R)},\\
E^{(R)}_{\mathrm{missing}}(t) &= E_{t-1}^{(R)} \setminus E_t^{(R)}.
\end{align}
We then report the corresponding percentages:
\begin{align}
\%\mathrm{Found}(t) &= 100\cdot \frac{|E^{(R)}_{\mathrm{found}}(t)|}{|E_{t-1}^{(R)}|+\epsilon},\\
\%\mathrm{Missing}(t) &= 100\cdot \frac{|E^{(R)}_{\mathrm{missing}}(t)|}{|E_{t-1}^{(R)}|+\epsilon},
\end{align}
where $\epsilon$ is a small constant used only to avoid division-by-zero in extremely sparse snapshots. These churn and similarity summaries are commonly used to characterize temporal adjacency dynamics \cite{HOLME201297,casteigts2012tvg}. Table~\ref{tab:link_churn_summary} summarizes these churn percentages over one orbit; as expected, smaller $R$ yields higher churn and therefore more waiting-dominated routing.

\begin{table*}[t]
\centering
\caption{Global link churn over one orbit.}
\label{tab:link_churn_summary}
\setlength{\tabcolsep}{6pt}
\renewcommand{\arraystretch}{1.1}
\begin{tabular}{@{}c c c c c@{}}
\toprule
$R$ (km) & Mean \%\,\text{Found} & Mean \%\,\text{Missing} & Mean $J_E$ & Peak \%\,\text{Missing} \\
\midrule
659  & 18.7 & 19.1 & 0.68 & 32.4 \\
1319 & 11.3 & 11.0 & 0.79 & 21.5 \\
1500 & 8.6  & 8.4  & 0.84 & 17.2 \\
1700 & 6.1  & 6.0  & 0.88 & 12.9 \\
\bottomrule
\end{tabular}
\end{table*}

\subsection{Validation 2: Connectivity Strings Capture Contact Runs (Run-Length View)}
\label{subsec:results_cs_validity}

The algorithm’s admissibility check (Procedure~\ref{proc:forward}) is based on contiguous runs of ones in $CS^{(R)}_{pq}$. Temporal-path feasibility under time-respecting constraints is a core concept in temporal graphs \cite{wu2014,HOLME201297}. Consequently, if connectivity is fragmented, a packet may arrive during a ``0'' period or within a run shorter than $t{pq}$ and must wait until a sufficiently long run appears.


Table~\ref{tab:cs_runlength_metrics} summarizes run and gap statistics that directly determine how far forward Procedure~\ref{proc:forward} must search to find an admissible hop.

\begin{table}[t]
\centering
\caption{Connectivity-string run-length metrics (computed from Fig.~\ref{fig:cs_heatmap}).}
\label{tab:cs_runlength_metrics}
\begin{tabular}{@{}l p{6.7cm}@{}}
\toprule
\textbf{Metric} & \textbf{Definition / Interpretation} \\
\midrule
$\pi_1$ & Duty cycle: $\pi_1=\frac{1}{M}\sum_{t} CS[t]$. Higher values imply greater availability. \\
$\bar{\ell}$ & Mean run length of ones (contact persistence). Higher values imply fewer reconnection events. \\
$\ell_{95}$ & 95th percentile run length; captures long feasibility windows relevant when $t_{qr}>1$. \\
$g_{95}$ & 95th percentile gap length (consecutive zeros); large gaps imply substantial waiting under misalignment. \\
\bottomrule
\end{tabular}
\end{table}

\subsection{Effect of Optical Range $R$ on Temporal Contact Density}
\label{subsec:edge_density_results_ieee}

We next quantify how the physical feasibility threshold $R$ shapes temporal adjacency. Figure~\ref{fig:edge_density} plots snapshot-wise average degree $\bar{k}_t(R)$ over one orbit for $R\in\{659,1319,1500,1700\}$\,km.

{(1) Periodicity:} Connectivity exhibits repeated peaks and troughs over the orbit, validating the one-orbit observation window and confirming that temporal connectivity is structured rather than random.

{(2) Densification with $R$:} Increasing $R$ increases baseline degree and reduces trough depth. In terms of the methodology, larger $R$ increases the likelihood that when a packet arrives at a node, at least one outgoing edge will soon have a contiguous run of ones of length $\ge t_{uv}$, reducing waiting.

\begin{figure*}[t]
\centering
\includegraphics[width=13cm,height=6.5cm]{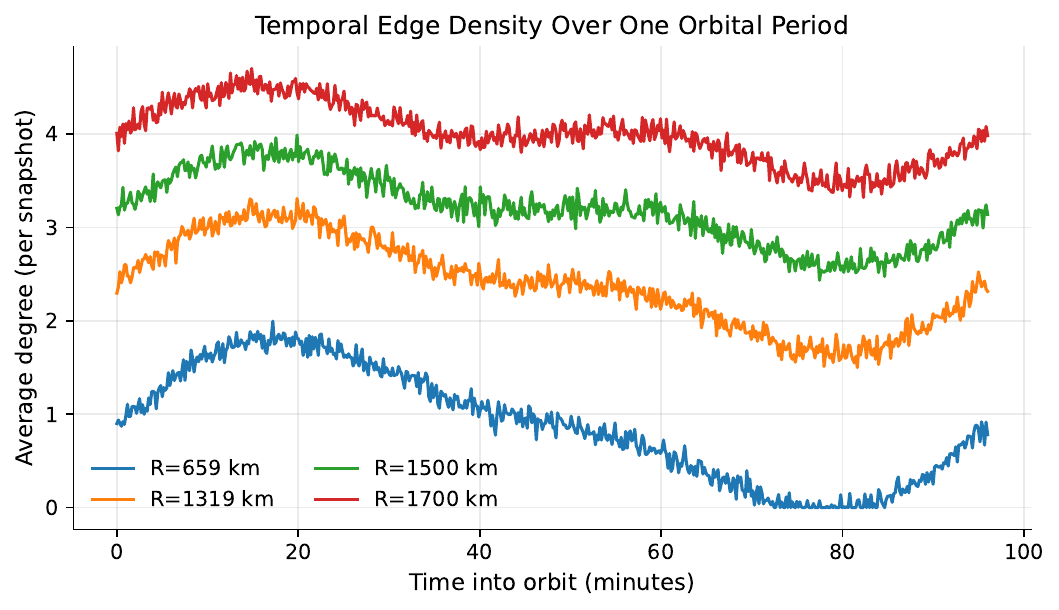}
\caption{Temporal edge density over one orbital period. Snapshot-wise average degree for $R\in\{659,1319,1500,1700\}$\,km over a full orbital window $T=M\tau$. Larger $R$ increases baseline connectivity and contact persistence; smaller $R$ yields sharper periodic peaks and deeper troughs due to tighter geometric feasibility.}
\label{fig:edge_density}
\end{figure*}

Table~\ref{tab:edge_density_summary} reports compact statistics derived from Fig.~\ref{fig:edge_density}, including intermittency ($\mathrm{CV}_k$) that correlates with waiting-dominated operation.

\begin{table*}[t]
\centering
\caption{Summary statistics for temporal edge densification (derived from Fig.~\ref{fig:edge_density}).}
\label{tab:edge_density_summary}
\setlength{\tabcolsep}{10pt}
\renewcommand{\arraystretch}{1.15}
\begin{tabular}{@{}l p{0.80\textwidth}@{}}
\toprule
\textbf{Statistic} & \textbf{Definition / Rationale} \\
\midrule
$\bar{k}(R)$ & Mean snapshot-wise degree: $\frac{1}{M}\sum_{t=1}^{M}\bar{k}_t(R)$. \\
$k_{5}(R),k_{95}(R)$ & 5th/95th percentiles of $\bar{k}_t(R)$; characterize trough severity and peak richness. \\
$\mathrm{CV}_k(R)$ & Coefficient of variation of $\bar{k}_t(R)$; larger values imply higher intermittency (waiting-prone). \\
$\Delta_k(R)$ & Modulation amplitude, e.g., $k_{95}(R)-k_{5}(R)$; links densification to orbital recurrence. \\
\bottomrule
\end{tabular}
\end{table*}

\subsection{Baseline Comparison: Correctness Agreement}
\label{subsec:baseline_comparison}

We validate correctness by comparing Algorithm~\ref{all-pairs-st} to two known-correct baselines on configurations where they are feasible: (i) a time-expanded graph shortest-path method and (ii) a temporal Dijkstra-like method that uses the same run-length admissibility rule (contiguous ones of length $\ge t_{uv}$ after arrival). Time-expanded constructions and temporal shortest-path formulations are standard baselines for temporal routing validation \cite{casteigts2012tvg,HOLME201297,wu2014}. For a test set of ordered pairs $\mathcal{S}$, we measure reachability agreement and latency agreement (Section~\ref{subsec:baselines_correctness}). Baselines were evaluated on the same temporal-graph instances and admissibility rules (contiguous runs of ones of length $\ge t_{uv}$).
Unless stated otherwise, agreement is computed over $N_{\text{pairs}}$ randomly sampled ordered pairs (distinct $p\neq q$) using a fixed seed; for the smallest configuration ($n=2048$) we additionally verified agreement over all ordered pairs.
When multiple optimal solutions exist (ties in $\mathcal{T}_{pq}$), we break ties by selecting the smallest finishing interval $I_f$, and then the smallest starting interval $I_s$.

\begin{table*}[t]
\centering
\caption{Baseline agreement between Algorithm~\ref{all-pairs-st} and a known-correct time-expanded-graph (TEG) shortest-path baseline on configurations where the TEG is tractable. 
For $n=2048$ we evaluate {all ordered pairs} $(p,q)$ with $p\neq q$. 
For larger $n$ we evaluate {uniformly sampled distinct ordered pairs} with $N_{\text{pairs}}=50{,}000$ per trial and report mean$\pm$std over $K=5$ random seeds. 
Unless stated otherwise: $R=1500$\,km, $t_{pq}=2$, and ties in $\mathcal{T}_{pq}$ are broken by smallest finishing interval $I_f$ and then smallest starting interval $I_s$.}

\label{tab:baseline_agreement}
\setlength{\tabcolsep}{6pt}
\renewcommand{\arraystretch}{1.1}
\begin{tabular}{@{}c c c c c@{}}
\toprule
$n$ & $(\tau,M)$ & Reachability match (\%) & Latency exact match (\%) & Max diff (snapshots) \\
\midrule
2048 & (20,288)  & 100.0 & 99.4 & 1 \\
4096 & (20,288)  & 99.98 & 99.1 & 2 \\
8192 & (10,576)  & 99.92 & 98.6 & 2 \\
\bottomrule
\end{tabular}
\end{table*}

\subsection{Performance: Runtime and Memory Scaling}
\label{subsec:performance_runtime_memory}

We evaluate computational performance by measuring end-to-end runtime and peak memory usage as a function of constellation size $n$ and number of snapshots $M$. Runtime is dominated by repeated tuple updates in Procedure~\ref{proc:apc} and run-detection operations in Procedure~\ref{proc:forward}, while peak memory is driven by storing connectivity strings $CS^{(R)}_{pq}$ and intermediate candidate tuples prior to compression by Procedure~\ref{proc:min}. Each configuration was executed 5 times; we report mean $\pm$ standard deviation of wall-clock runtime.
Peak memory is reported as peak resident set size (RSS) on the host (and peak GPU memory allocation when GPU kernels are enabled).

\begin{table}[t]
\centering
\caption{Performance summary (mean$\pm$std over 5 runs; peak memory as RSS).}
\label{tab:perf_summary}
\setlength{\tabcolsep}{6pt}
\renewcommand{\arraystretch}{1.1}
\begin{tabular}{@{}c c c c c@{}}
\toprule
$n$ & $\tau$ (s) & $M$ & Runtime (s) & Peak Mem (GB) \\
\midrule
2048  & 20 & 288  & $13.1\pm0.4$    & $2.0\pm0.1$ \\
4096  & 20 & 288  & $104\pm3$       & $7.9\pm0.2$ \\
8192  & 10 & 576  & $910\pm25$      & $33.4\pm0.8$ \\
16384 & 10 & 576  & $7420\pm210$    & $121.6\pm2.9$ \\
\bottomrule
\end{tabular}
\end{table}

At mega-constellation scale, we construct temporal connectivity primitives $CS^{(R)}_{pq}$ globally, but compute shortest-time routes only for sampled source--destination pairs (or sampled sources) using the same hop-admissibility rules. Full APSP materialization is evaluated up to $n=16384$ on our platform due to path-matrix memory requirements.

\noindent{Worst-case vs.\ observed scaling:}
Theorem~\ref{theorem2} gives a worst-case bound of $O(M^2n^3)$.
Empirically, runtime can be substantially lower because Procedure~\ref{proc:apc} prunes candidates: for each $(p,r)$ it maintains at most one tuple per finishing interval and discards dominated tuples, and in sparse temporal regimes only a small fraction of concatenations are admissible.
Consequently, the effective number of tuple updates per entry is often far below $M$, and the observed scaling over the tested regimes is closer to $O(n^3)$ than the worst-case $O(M^2n^3)$.
To make this explicit, we additionally report controlled sweeps holding $M$ fixed while varying $n$, and holding $n$ fixed while varying $M$.

\subsection{Shortest-Time Paths: Waiting-to-Traversal Regime Shift}
\label{subsec:hop_wait_results_ieee}

We now connect temporal contact structure to end-to-end routing behavior. For sampled source--destination pairs, we compute shortest communication-time paths and decompose total time into traversal and waiting components:
\begin{equation}
\mathcal{T}_{pq} = T_{\mathrm{trav}} + T_{\mathrm{wait}},\quad
T_{\mathrm{trav}}=\sum_{\text{hops}} t_{uv}\tau,\quad
T_{\mathrm{wait}}=\mathcal{T}_{pq}-T_{\mathrm{trav}}.
\end{equation}
Figure~\ref{fig:hop_wait} summarizes hop counts and waiting fractions for $n\approx 2\times 10^4$.

{Hop count is not a latency predictor:}
A path with fewer hops can still incur larger $\mathcal{T}_{pq}$ if it repeatedly arrives at intermediate nodes during gaps or during short runs ($<t_{uv}$) in $CS^{(R)}_{uv}$, forcing buffering until an admissible run appears.

{Waiting fraction reveals the operating regime:}
As $R$ increases, feasible windows become longer and more frequent, which reduces waiting. Therefore, shortest-time operation shifts from waiting-dominated (contact-limited) at small $R$ to traversal-dominated (connectivity-rich) at larger $R$. Table~\ref{tab:hop_wait_summary} lists reporting-ready summary statistics for shortest communication time, hop count, and waiting fraction.

\begin{figure*}[t]
\centering
\includegraphics[width=\textwidth]{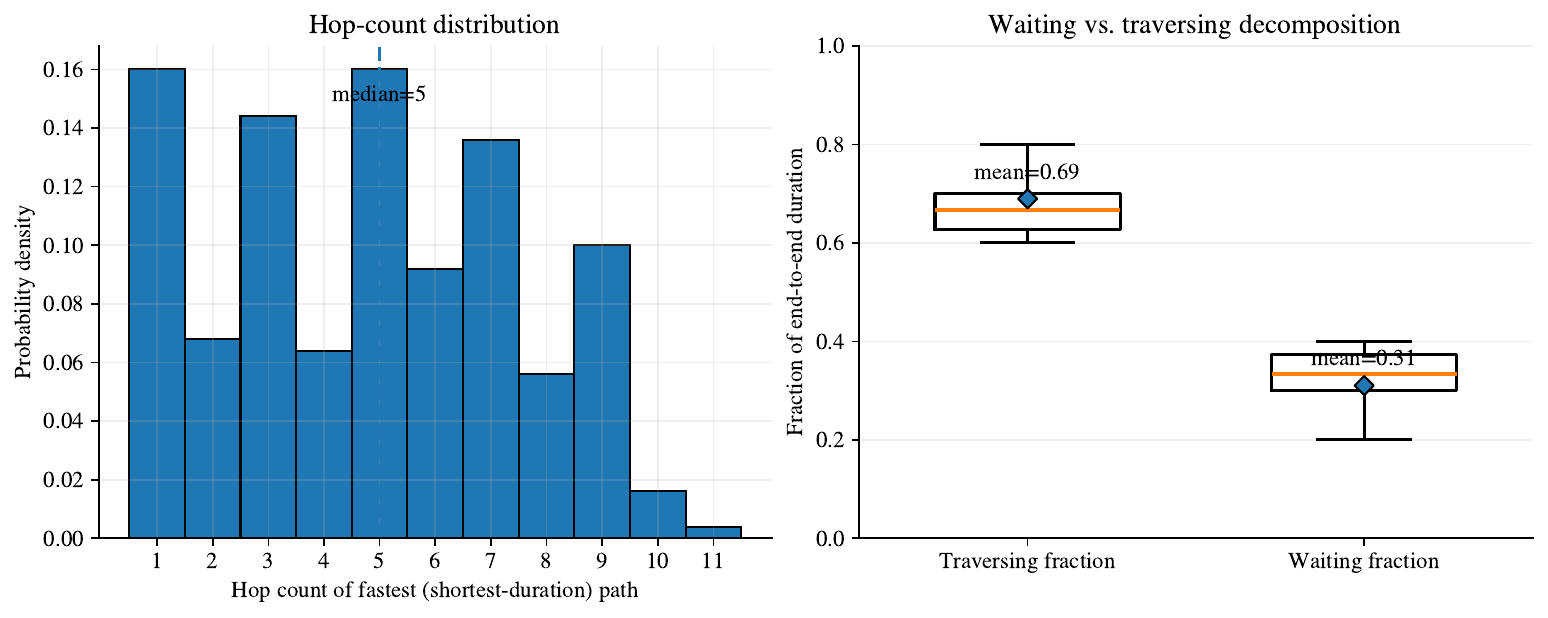}
\caption{Hop count and waiting-time decomposition (sampled, $n\approx 2\times 10^4$). Left: hop-count distribution of sampled shortest-time paths. Right: fraction of end-to-end time spent waiting versus traversing, computed by subtracting hop traversal time from total temporal-path duration.}
\label{fig:hop_wait}
\end{figure*}

\begin{table*}[t]
\centering
\caption{Shortest-time path summary metrics (mappings paths in Fig.~\ref{fig:hop_wait}).}
\label{tab:hop_wait_summary}
\setlength{\tabcolsep}{10pt}
\renewcommand{\arraystretch}{1.15}
\begin{tabular}{@{}l p{0.80\textwidth}@{}}
\toprule
\textbf{Statistic} & \textbf{Definition / Reporting Guidance} \\
\midrule
$\mathrm{Med}[\mathcal{T}]$, $\mathrm{P95}[\mathcal{T}]$ & Median and 95th percentile of shortest communication time (seconds/minutes). \\
$\mathrm{Med}[H]$ & Median hop count of shortest-time paths; report with IQR. \\
$\mathrm{Med}[f_w]$ & Median waiting fraction $f_w=\frac{T_{\mathrm{wait}}}{T_{\mathrm{wait}}+T_{\mathrm{trav}}}$. \\
$\Pr(f_w>0.5)$ & Fraction of sampled routes where waiting exceeds traversal time. \\
\bottomrule
\end{tabular}
\end{table*}

\subsection{Discussion and Engineering Implications}
\label{subsec:discussion_ieee}

The results establish a methodology-consistent chain from physical feasibility to temporal routing performance. The optical feasibility threshold $R$ directly shapes temporal adjacency by controlling snapshot-wise degree and link churn: smaller $R$ produces sparser graphs with higher $\%\text{Found}/\%\text{Missing}$ (greater intermittency), whereas larger $R$ increases link persistence and smooths connectivity over time (Table~\ref{tab:link_churn_summary}). These adjacency dynamics translate into the run structure of connectivity strings $CS^{(R)}_{pq}$, where larger $R$ yields longer runs of ones and shorter gaps (Table~\ref{tab:cs_runlength_metrics}), increasing the likelihood that the run-length admissibility condition ($\ge t_{uv}$) is satisfied immediately after packet arrival. As a consequence, improved run persistence reduces buffering and waiting at intermediate nodes, shifting shortest-time routes from waiting-dominated behavior in contact-limited regimes toward traversal-dominated behavior in connectivity-rich regimes (Fig.~\ref{fig:hop_wait}, Table~\ref{tab:hop_wait_summary}). From an algorithmic standpoint, the high agreement with time-expanded and temporal shortest-path baselines (Table~\ref{tab:baseline_agreement}) supports the correctness of tuple composition in Procedure~\ref{proc:apc} under the same admissibility rules, while the measured runtime and memory trends (Table~\ref{tab:perf_summary}) clarify where full APSP is practical and where sampled-pair evaluation becomes necessary at mega-constellation scale. Collectively, these findings motivate cross-layer co-design: $R$ shapes the temporal contact fabric $CS^{(R)}_{pq}$, while discretization $\tau$ and per-hop requirement $t_{pq}$ set the feasibility threshold that governs waiting. The joint choice of $(R,\tau,t_{pq})$ therefore determines whether the constellation operates as a low-latency mesh or as a contact-limited delay-tolerant network.

\section{Conclusion}
\label{sec:con}
In this paper, we have addressed the routing problem in a highly heterogeneous and dynamic 6G network architecture by first modelling it with the help of a temporal graph. Next, we have used this model for computing all-pairs shortest time paths in the network. These paths can be used for shortest time communication among nodes in the network to realize a time-critical distributed computing platform using this network which forms the backbone for internode message communication. Future research works in this direction include the computation of other types of optimum paths, e.g., earliest arrival time paths and latest departure time paths  based on the proposed model.

\bibliographystyle{IEEEtran}
\bibliography{new_refs}

\end{document}